\DeclareMathOperator{\polylog}{polylog}
\DeclareMathOperator{\poly}{poly}
\DeclareMathOperator{\argmin}{argmin}
\newcounter{magicrownumbers}
\newcommand{\abs}[1]{\lvert #1 \rvert}
\definecolor{Darkblue}{rgb}{0,0,0.4}
\definecolor{Brown}{cmyk}{0,0.61,1.,0.60}
\definecolor{Purple}{cmyk}{0.45,0.86,0,0}
\definecolor{Darkgreen}{rgb}{0.133,0.543,0.133}
\newtheorem*{theorem*}{Theorem}
\newtheorem{theorem}{Theorem}
\newtheorem{lemma}{Lemma}
\newtheorem{definition}{Definition}
\newtheorem{claim}{Claim}
\newtheorem*{question*}{Question}
\newtheorem*{conjecture*}{Conjecture}
\newtheorem{question}{Question}
\newtheorem{construction}{Construction}
\newcommand{\old}[1]{{}}
\title{Geometric Bipartite Matching is in NC} 
\date{}
\author{Sujoy Bhore\thanks{Department of Computer Science \& Engineering, Indian Institute of Technology Bombay, India.\\ Email: sujoy@cse.iitb.ac.in}
 \quad
Sarfaraz Equbal\thanks{Department of Computer Science \& Engineering, Indian Institute of Technology Bombay, India.\\ Email: sequbal@cse.iitb.ac.in}
\quad
Rohit Gurjar\thanks{Department of Computer Science \& Engineering, Indian Institute of Technology Bombay, India.\\ Email: rgurjar@cse.iitb.ac.in, supported by SERB MATRICS grant MTR/2022/001009}
}
\begin{document}
\maketitle

\begin{abstract}

In this work, we study the parallel complexity of the Euclidean minimum-weight perfect matching (\textsf{EWPM}) problem. 
Here our graph is the complete bipartite graph $G$ on two sets of points $A$ and $B$ in $\mathbb{R}^2$ and the weight of each edge is the Euclidean distance between the corresponding points. 
The weighted perfect matching problem on general bipartite graphs is known to be in \textsf{RNC} [Mulmuley, Vazirani, and Vazirani, 1987], and Quasi-\textsf{NC} [Fenner, Gurjar, and Thierauf, 2016]. 
Both of these results work only when the weights are of $O(\log n)$ bits. 
It is a long-standing open question to show the problem to be in \textsf{NC}.

First, we show that for \textsf{EWPM}, a linear number of bits of approximation 
is required to distinguish between the minimum-weight perfect matching and other perfect matchings. 
Next, we show that the \textsf{EWPM} problem that allows up to $\frac{1}{\poly(n)}$ additive error, 
is in \textsf{NC}.
\end{abstract}

\section{Introduction}
\old{

Introduction flow. 

\begin{itemize}
    \item Literature on NC. Mention - general graphs it is quasi-NC. Also, general bipartite graphs it is quasi-NC. 
    \item Geometric Matching. Can we use geometric properties to show that it is in NC. 
    \item Literature on geometric bipartite matching. But, we can only solve approx. In the next point, we justify.. 
    \item Discuss the issue of precision: sum of square roots, etc... E.g., EXACT TSP not known to be in NP, or Matching/Spanning tree not even known to be in P...  
\end{itemize}
}

The perfect matching problem is one of the well-studied problems in Complexity theory, especially, in the context of derandomization and parallelization.
Given a graph $G=(V,E)$, the problem asks,  whether the graph contains a matching that matches every vertex of $G$. 
Due to Edmonds~\cite{edmonds1965paths}, the problem is known to be solvable in polynomial time. 
However, the parallel complexity of the problem has not been completely resolved till today.
In 1979, Lov\'{a}sz~\cite{lovasz1979determinants} showed that perfect matching can be solved by efficient randomized  parallel algorithms, i.e., the problem is in 
\textsf{RNC}. 
Hence, the main question, with respect to its parallel complexity, is whether this randomness is necessary, i.e., whether the problem is in \textsf{NC}\footnote{The class \textsf{NC} represents the
problems that have efficient parallel algorithms, i.e., they have uniform circuits of polynomial size and $\polylog$ depth}. 

The search version of the problem asks to explicitly construct a perfect matching in a graph if one exists. 
Note that in the parallel setting, there is no obvious reduction from search to decision. 
This version is also known to be in \textsf{RNC}~\cite{karp1985constructing, mulmuley1987matching}. 
The Mulmuley-Vazirani-Vazirani (MVV) algorithm~\cite{mulmuley1987matching}, in fact, also  works
for the weighted version of the problem, 
where there is a polynomially bounded weight assignment given on the edges of the graph.

The MVV algorithm~\cite{mulmuley1987matching} introduced the celebrated \emph{Isolation lemma}.
A weight assignment is called \emph{isolating} for a graph $G$, if the minimum weight perfect matching in $G$ is unique, if one exists. 
Mulmuley, Vazirani, and Vazirani~\cite{mulmuley1987matching} showed that given an isolating weight assignment with polynomially bounded integer weights for a graph $G$, a perfect matching in $G$ can be constructed
in \textsf{NC}.
The only place where they use randomization is to get an isolating weight assignment.
Their Isolation lemma states that a random weight assignment is isolating!

\emph{Derandomizing} the Isolation lemma means to construct such a weight assignment deterministically in \textsf{NC}.
A line of work derandomized the Isolation Lemma for special families of graphs, e.g., planar bipartite graphs~\cite{datta2010deterministically, tewari2012greens}, strongly chordal graphs~\cite{dahlhaus1998matching}, graphs with a small number of perfect matchings~\cite{grigoriev1987matching}.
In 2016, Fenner, Gurjar, and Thierauf~\cite{fenner2016bipartite} showed that the bipartite perfect matching problem is in quasi-\textsf{NC}, by an almost complete derandomization of the Isolation Lemma. 
Later, Svensson and Tarnawski~\cite{svensson2017matching} showed that  
the problem in general graphs is also in Quasi-\textsf{NC}. 
Subsequently, Anari and Vazirani~\cite{AnariV20} gave an \textsf{NC} algorithm 
for finding perfect matching in general planar graphs.
All of these algorithms work for the weighted version (poly-bounded) of the problem as well. 

What remains a challenging open question is to find an \textsf{NC} algorithm 
for any versions (decision/search/weighted) of the perfect matching problem, 
even for bipartite graphs. 
Inspired by the positive results on planar bipartite graphs, we investigate 
the weighted version of the perfect matching problem in the geometric setting (2 dimensional). 


\paragraph*{Geometric Bipartite Matching.} 
Let $A$ and $B$ be two point sets in $\mathbb{R}^2$ of size $n$ each. 
Consider the \textbf{complete} bipartite graph $G(A,B, E)$ with the following cost function on the edges: 
for any edge $e=(a,b)$, define $\mathcal{C}(e)= ||a-b||$,  where $||\cdot||$ denotes the Euclidean norm. 
In other words, we consider the Euclidean distance between the endpoints as the cost of an edge. 
The cost of a perfect matching $M$ is the sum of its edge costs $\mathcal{C}(M)=\Sigma_{e\in M} \mathcal{C}(e)$. 
The Euclidean minimum-weight perfect matching (\textsf{EWPM}) problem
is to find 
$M_{opt}=\argmin_{|M|=n}\mathcal{C}(M)$,
that is, the optimal perfect matching with respect to function $\mathcal{C}$.
\textsf{EWPM} 
is a fundamental problem in Computational Geometry and has been studied extensively over the years. 
See Section~\ref{related_work} for an overview of the results.
In this work, we focus on the parallel complexity of \textsf{EWPM} problem, and ask the following question - 

\begin{question}\label{ewpm-nc}
    Is \textsf{EWPM} in \textsf{NC}?
\end{question}

Optimization problems in computational geometry are usually studied in 
real arithmetic computational model, where comparing two distances
or sums of distances is assumed to be a unit cost operation.
However, in the bit complexity model, it is not clear if distances,
which can be irrational numbers, can be efficiently added or compared. 
In fact, the problem of comparing a two sums of square roots is not known to be in \textsf{P} (see, for example,~\cite{o1981advanced, TOPP}). 
See~\cite{EHS24} for some recent progress on the sum of square roots problem. 

Our path towards showing an \textsf{NC} algorithm for \textsf{EWPM}  
naturally goes via the MVV algorithm.
Recall that the MVV algorithm works only when the given weights/costs are polynomially bounded integers, because in intermediate steps, it needs to put weights in the exponent.
Hence, inevitably we need to consider the bit complexity of the weights. 
Note that there are other parallel algorithms for the weighted perfect matching problem (e.g.,~\cite{Goldberginterior92}), 
but there too it is important that the weights are polynomially bounded integers. 

It is not clear if the \textsf{EWPM} problem is in \textsf{P} (or even in \textsf{NP})
in the bit-complexity model. 
To the best of our knowledge, the existing algorithms for \textsf{EWPM} require the comparison between two 
sums of square roots. 
This naturally leads us to consider an approximate version of the problem. 
Let us define the $\delta$-\textsf{EWPM} problem, which asks for the Euclidean minimum-weight perfect matching up to an additive error $\delta$.  
We aim to get an \textsf{NC} algorithm for the problem whenever $1/\delta$ is $\poly(n)$. 





\subsection{Our Contribution}
In this work, we study the parallel complexity of $\delta$-\textsf{EWPM} problem. 
First, it is natural to ask whether solving $\delta$-\textsf{EWPM} for some $\delta =1/\poly(n)$
will already solve the \textsf{EWPM} problem.
In other words, by considering  $O(\log n)$ bit approximations of Euclidean distances, can we hope to find the Euclidean minimum weight perfect matching?
Our first result rules out this possibility. 
We show that for \textsf{EWPM}, a super-linear number of bit approximations is required to distinguish the minimum-weight perfect matching from others. 

\begin{theorem}
There is a set of $2n$ points in the $O(n^5) \times O(n^5)$ integer grid
such that in the corresponding complete bipartite graph,
the difference between the weights of 
the minimum weight perfect matching and another perfect matching
is at most $1/(n-1)!$. 
\label{thm:lower bound}
\end{theorem}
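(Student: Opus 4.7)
The plan is to exhibit an explicit configuration: place the $2n$ points on two vertical lines at $x=0$ and $x=L$ with $L=\Theta(n^5)$, so $A_i=(0,a_i)$ and $B_i=(L,b_i)$ for integers $a_i,b_i$ of size $\mathrm{poly}(n)$. Every bipartite edge then has length $\sqrt{L^2+d^2}$ for an integer displacement $d=a_i-b_j$, and since $|d|\ll L$ the binomial expansion
\[
\sqrt{L^2+d^2} \;=\; L + \frac{d^2}{2L} - \frac{d^4}{8L^3} + \frac{d^6}{16L^5} - \cdots
\]
converges rapidly. Consequently, the weight difference between any two matchings $\sigma,\tau$ can be written as
\[
\mathcal{C}(\sigma) - \mathcal{C}(\tau) \;=\; \sum_{k\ge 1}\frac{\lambda_k}{L^{2k-1}}\bigl(P_k(\sigma)-P_k(\tau)\bigr),
\]
where $P_k(\sigma):=\sum_i (a_i - b_{\sigma(i)})^{2k}$ and $\lambda_k\in\mathbb{Q}$ are fixed.

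The core step is to choose the coordinates and two matchings $\sigma,\tau$ so that the displacement multisets $\{a_i - b_{\sigma(i)}\}$ and $\{a_i - b_{\tau(i)}\}$ form a Prouhet--Tarry--Escott (PTE) pair of order $m$, i.e.\ $P_k(\sigma)=P_k(\tau)$ for $k=1,\ldots,m$. This annihilates the first $m$ terms of the series above and leaves a residual bounded by $n\cdot D^{2m+2}/L^{2m+1}$, where $D$ is the maximum displacement. Taking $D=O(n)$, $L=n^5$, and $m=\Theta(n)$ gives a residual of order $n^{-\Theta(n)}$, which sits comfortably below $1/(n-1)!\approx (e/n)^n$ for large $n$.

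To make sure that one of $\sigma,\tau$ is actually the global minimum (not just the smaller of two candidates), I would exploit the integrality of displacements: any third matching $\rho\notin\{\sigma,\tau\}$ has $P_1(\rho)-P_1(\sigma)$ a strictly positive integer, forcing $\mathcal{C}(\rho)-\mathcal{C}(\sigma)\ge \Omega(1/L)=\Omega(n^{-5})$, which dwarfs the tiny residual and rules $\rho$ out as the minimum. Hence $M_{\min}\in\{\sigma,\tau\}$ and the other element of this pair is within $1/(n-1)!$ of it.

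The hard part will be the combinatorial-geometric realization in the third step: producing an explicit choice of $(a_i),(b_i),\sigma,\tau$ such that the two displacement multisets form a PTE pair of order $\Theta(n)$ while $\sigma$ and $\tau$ are actually two distinct perfect matchings on a common set of $2n$ points all living in the polynomial range $[0,O(n^5)]$. Explicit high-order PTE pairs (e.g.\ derived from Prouhet--Thue--Morse type constructions or from ideal PTE solutions padded with common elements) supply the abstract multiset pair, but coupling them to a single geometric configuration on which both $\sigma$ and $\tau$ are realized as honest perfect matchings is the delicate step, and is where most of the technical work of the proof will be concentrated.
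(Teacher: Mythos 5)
Your approach is genuinely different from the paper's, but it has a fundamental gap at exactly the step you flag as ``the hard part,'' and that step is not merely delicate --- it runs into a well-known open problem. To kill the first $m$ terms of the expansion you need two multisets of $n$ integer displacements, each of magnitude $\poly(n)$, agreeing on all power sums up to order $m=\Theta(n)$: a Prouhet--Tarry--Escott pair whose degree is linear in its size. No such construction is known for general $n$. The Prouhet--Thue--Morse construction needs $2^m$ terms to reach degree $m$, so with $n$ terms it only gives $m=O(\log n)$ and a residual of about $L^{-O(\log n)}=n^{-O(\log n)}$; the best general-purpose bounds give degree $O(\sqrt{n})$ from $n$ terms, hence a residual of $n^{-O(\sqrt{n})}$. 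Both fall far short of $1/(n-1)!\approx n^{-\Theta(n)}$, and degree-$\Theta(n)$ PTE pairs of size $n$ are essentially the ``ideal PTE'' problem, open beyond size $12$. Your third step also quietly assumes that $\sigma$ minimizes $P_1$ over \emph{all} $n!$ matchings (so that $P_1(\rho)-P_1(\sigma)$ is a \emph{positive} integer); nothing in the construction as described guarantees this, and arranging it simultaneously with the PTE condition is an additional unaddressed constraint. (The part you worry about --- realizing $\sigma$ and $\tau$ as honest matchings --- is actually free, since the graph is complete bipartite and any bijection is a perfect matching.)

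The paper sidesteps all of this with a non-constructive pigeonhole argument: it places the points on two vertical lines much as you do, arranges that all $n!$ matchings have \emph{distinct} weights (via pairwise linear independence of the square-root edge lengths), and observes that all these weights lie in an interval of length $O(n)$, so some two matchings $M_1,M_2$ differ by at most roughly $n/n!\le 1/(n-1)!$ --- no cancellation of power sums is needed, and one never needs to know which two matchings they are. To handle your third issue (making one of the close pair the global minimum), the paper then ``unrolls'' the alternating cycles of $M_1\cup M_2$ into a second, horizontally stretched configuration in which the two matchings reappear with their weight gap preserved and every competing matching is provably heavier by $\Omega(q)$. If you want to salvage your route, you would need either a breakthrough on high-degree PTE pairs or to weaken the target gap to $n^{-O(\sqrt{n})}$; as written, the claimed $1/(n-1)!$ bound is not reachable by your method.
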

This theorem is proved in Section~\ref{Sec:lower_bound}. 

Next, we come to our positive result. 
We affirmatively answer Question~\ref{ewpm-nc}, by showing that the Euclidean minimum weight perfect matching problem that allows up to $\frac{1}{\poly(n)}$ error, 
is in \textsf{NC}.
\begin{theorem}
The $\delta$-\textsf{EWPM} is in \textsf{NC},
when $1/\delta$ is $\poly(n)$ and the points are on a polynomially bounded integer grid.
\label{theo: Main theo}
\end{theorem}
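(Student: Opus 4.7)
The overall strategy is to reduce the $\delta$-\textsf{EWPM} problem to a minimum-weight perfect matching problem with polynomially bounded integer edge weights, and then solve this discretized instance in \textsf{NC} by exploiting the geometric structure of the input.

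For the reduction, I would scale every Euclidean distance $||a-b||$ by a factor of $\Theta(n/\delta)$ (which is $\poly(n)$) and round to the nearest integer. Since the points lie on a polynomially bounded integer grid, every true distance is of the form $\sqrt{k}$ for some integer $k\le \poly(n)$, and hence the scaled-and-rounded weights $\hat{w}(a,b)$ are positive integers of magnitude $\poly(n)$. A routine additive-error argument then shows that any perfect matching whose $\hat{w}$-cost is at most $\hat{w}(\hat{M}_{opt})+1$ has true Euclidean cost within $\delta$ of $\mathcal{C}(M_{opt})$, so it is enough to solve the discretized instance optimally.

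The hard step is the second one: to find a minimum-weight perfect matching of a bipartite graph with polynomially bounded integer weights in \textsf{NC}. For general bipartite graphs this is currently only known to lie in Quasi-\textsf{NC}, so the geometry of the instance must do real work. The plan is to derandomize the Isolation Lemma using this geometric structure: since after rounding there are only $\poly(n)$ distinct edge weights, all arising from pairs of points on a polynomially bounded integer grid, one should be able to construct a polynomial-sized family of deterministic weight perturbations, at least one of which is isolating for the minimum-weight perfect matching. Running the MVV construction in parallel for every candidate perturbation and keeping the perfect matching of minimum $\hat{w}$-weight then yields the desired \textsf{NC} algorithm.

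The main obstacle is precisely this construction of a polynomial-sized isolating family: without the geometric structure, the best known deterministic construction (Fenner--Gurjar--Thierauf) has quasi-polynomial size. The proof should therefore leverage properties of Euclidean matchings on integer grids---such as the non-crossing property of optimal matchings, the $\poly(n)$-boundedness of distinct distances $\sqrt{k}$, and the restricted combinatorial structure of short alternating cycles between grid points---to bring the size of the candidate family down from $n^{O(\log n)}$ to $\poly(n)$. Once this is done, the outer wrapper (rounding, running MVV on each candidate in parallel, choosing the best) is routine and preserves \textsf{NC}.
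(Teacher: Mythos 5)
Your first step (scaling by $\Theta(n/\delta)$ and rounding to integers) matches the paper's use of the truncated distance function $d_\ell(e) = \lfloor d(e)\cdot 2^\ell\rfloor$, and the additive-error argument is the same as the paper's Claim~\ref{cla:approx}. But the second step --- the actual construction of an isolating weight assignment --- is exactly the part you leave open (``one should be able to construct a polynomial-sized family of deterministic weight perturbations\dots''; ``The main obstacle is precisely this construction''). That construction is the heart of the proof, and without it the argument does not go through: rounding alone leaves many tied minimum-weight perfect matchings, and ``only $\poly(n)$ distinct edge weights'' by itself gives no known route to isolation (hard instances for the Isolation Lemma can have very few distinct weights).

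The paper closes this gap with a single explicit weight assignment rather than a family. The two ingredients are: (a) a quantitative uncrossing lemma (Lemma~\ref{lem: Gap bound}) showing that for four non-collinear points on an $N\times N$ integer grid, replacing the two diagonals of a convex quadrilateral by two opposite sides decreases the total length by at least $1/(4N^4)$; this implies (Lemma~\ref{lem: union lemma}) that the union of all \emph{near}-minimum weight perfect matchings --- in particular, of all matchings minimizing the rounded weight $d_\ell$ once $\ell \geq 4\log N + \log n + 3$ --- is a \emph{planar} subgraph of the complete bipartite graph; and (b) the observation that the Tewari--Vinodchandran isolating weight function $W_{TV}(u,v) = (y_v - y_u)(x_v + x_u)$ for planar bipartite graphs is defined purely from the coordinates of the embedded points, so it can be evaluated on the whole complete bipartite graph without knowing the planar subgraph. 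The combined weight $W_\ell = (2nN^2+1)\,d_\ell + W_{TV}$ then orders matchings primarily by $d_\ell$ and uses $W_{TV}$ to break ties inside the planar union, giving a unique minimizer; MVV is invoked once, on this single $O(\log n)$-bit weight assignment. You correctly intuited that the non-crossing property of optimal matchings is the relevant geometric structure, but you would still need to prove the quantitative grid-gap bound, extend non-crossing from a single matching to the union of all near-minimum matchings (which requires a careful two-cycle uncrossing argument), and realize that an off-the-shelf coordinate-based planar isolating scheme can be transplanted onto the non-planar complete graph --- none of which is routine.
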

This theorem is proved in Section~\ref{sec: geometric matching}.



\subsection{Related Work}\label{related_work}
The classical Hopcroft-Karp algorithm computes
a maximum-cardinality matching in a bipartite graph with $n$ vertices and $m$ edges in $O(m\sqrt{n})$ time~\cite{hopcroft1973n}. 
After almost three decades, Madry~\cite{madry2013navigating} improved the running time to $O(m^{10/7}\polylog n)$ time, which was further improved to 
$O(m+n^{3/2}\polylog n)$ by Brand et al.~\cite{van2020bipartite}. 
The Hungarian algorithm computes the minimum-weight maximum cardinality matching in $O(mn+n^{2}\log n)$ time~\cite{papadimitriou1998combinatorial}. 
In a recent breakthrough, Brand et al.~\cite{van2023deterministic} showed that maximum-cardinality matching in bipartite graphs, can be solved in near-linear time.

For two sets of points $A$ and $B$ in $\mathbb{R}^2$, 
the best known algorithm for computing \textsf{EWPM} runs in 
$O(n^2\polylog n)$ time~\cite{agarwal2019efficient, agarwal1995vertical}. Moreover, if points
have integer coordinates bounded by $\Delta$, the running time can be improved to $O(n^{3/2}\polylog n \log \Delta)$~\cite{sharathkumar2013sub}.
If coordinates of input points have real values, it is not known whether a subquadratic algorithm exists for computing EMWM. However, for the non-bipartite case, Varadarajan~\cite{varadarajan1998divide} presented an $O(n^{3/2}\polylog n)$-time algorithm under any $\ell_p$-norm. For bipartite matching, a large body of literature focused on obtaining approximate matching for points in $\mathbb{R}^d$. Varadarajan and Agarwal~\cite{varadarajan1999approximation} presented an $O(n^{3/2}\varepsilon^{-d}\log^d n)$-time $\varepsilon$-approximation algorithm for
computing \textsf{EWPM} of points lying in $\mathbb{R}^d$.
Later, Agarwal and Raghavendra~\cite{sharathkumar2012algorithms} improved the running time. Recently, Agarwal et al.~\cite{agarwal2022deterministic} presented a deterministic algorithm with running time $n\cdot (\varepsilon^{-1}\log n)^{O(d)}$ time, and computes a perfect matching whose cost is within a $(1+\varepsilon)$ factor of the optimal matching under any $\ell_p$-norm.

\old{
We are interested in the geometric perfect matching problem for both bipartite and non-bipartite versions. 

Open Problems and ideas. 

\begin{enumerate}
    \item Ques. For points in $\mathbb{R}^d$, Geometric Minimum Weight Perfect Matching is in \textsf{NC}? - We don't know this answer even for $d=2$. 

    \item For points in $\mathbb{R}^d$, Geometric Minimum-Weight Bipartite Perfect Matching is in \textsf{NC}? We can show for points in the plane this is True. 

    \item Does there exist a lower bound? - We can show an existential lower bound that shows, in order to distinguish between minimum weight perfect matching and next one, it would require at least ... bits.

    \item Can we show an upper bound on the number of bits? - Currently, we are trying to show such an upper bound, which would also mean that the problem is in class \textsf{P}, which is surprisingly unknown till date. 
\end{enumerate}

}

\old{
The parallel complexity of \textsf{EWPM}, has not been studied in the bit-computation model. Moreover, in the Euclidean setting, under $\ell_2$-norm, we need to deal with square roots. Hence, the problem is closely related to the sum of squares root problem\footnote{In the sum of square root problem, the objective is to find 
 tight upper and lower bounds on the minimum positive values (say, $r(n,k)$) of $|\sum_{i=1}^{k} \sqrt{a_i} - \sum_{i=1}^{k} \sqrt{b_i}|$, where $a_i$ and $b_i$ are integers no larger than $n$.}, which is not known to be in $\mathcal{P}$ till date; see~\cite{o1981advanced} and ~https://topp.openproblem.net/p33
 \todo{S: how to cite this?}.
 The sum of square roots problem has been extensively studied over the years in Complexity theory; see~\cite{kayal2012sum, balaji2024ussr, jindal2023order, balaji2022identity}.

In the bit-computation model, the problem of distinguishing between two sums of square roots is not known to be in \textsf{P}. Of particular interest, it remains a key bottleneck to deciding for some classical optimization problems in Computational Geometry, e.g., Euclidean matching, shortest paths, etc, whether they are solvable in polynomial time in the bit-computation model.
 }

\section{Lower Bound}\label{Sec:lower_bound}

In this section, we want to show that for a geometric bipartite graph with $n+n$ vertices, 
we need at least $\Omega(n \log n)$ bits of precision to distinguish the minimum weight perfect matching from others. 
We will show this by constructing a bipartite set of $2n$ points in the integer grid of size $O(n^5) \times O(n^5)$
 such that the difference between the weights of the minimum weight perfect matching and the one with the next higher weight will be $1/(n-1)!$.
Towards this, the first step is to construct a geometric graph where there are 
two perfect matchings whose weights differ by at most $ 1/(n-1)!$ (Claim~\ref{cla:TwoClosePM}).
Here we use an argument based on the pigeonhole principle. 
A similar argument  was used to show such a bound on the difference of two sums of square roots~\cite{ChengYuHsin10}. 

In the above construction, it is not necessary that one of the two perfect matchings is of minimum weight.  
In the second step, we show that the above geometric graph can be modified to construct
another one where the same two perfect matchings appear,
but now one of them is of minimum weight (Claim~\ref{cla:MinClosePM}).

\begin{construction}
Consider the left hand side vertices $u_0,u_1, \dots, u_{n-1}$ at points
\[\{ (0, 0), (0, 1), (0, 2), \dots, (0, n-1)  \} .\]
Similarly, consider the  right hand side vertices
$v_0,v_1, \dots, v_{n-1}$ at points
\[\{ (q, n), (q, 2n), (q, 3n), \dots, (q, n^2 )  \} , \]
where $q=n^4$.
\label{con:bipartitegraph}
\end{construction}

\begin{claim}
The geometric bipartite graph in Construction~\ref{con:bipartitegraph}
has all its perfect matchings with distinct weights. 
    \label{cla:distinct}
\end{claim}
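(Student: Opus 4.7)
The plan is to parametrize each perfect matching by an $n$-subset of $\{1,\ldots,n^{2}\}$, then reduce the distinct-weights property to a linear-independence statement for surds.

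Every perfect matching corresponds to a bijection $\pi:\{0,\ldots,n-1\}\to\{1,\ldots,n\}$ and has weight $w(\pi)=\sum_{i=0}^{n-1}\sqrt{q^{2}+k_{i}^{2}}$, with $k_{i}:=\pi(i)\,n-i$ and $q=n^{4}$. The map $(i,j)\mapsto jn-i$ is a bijection from $\{0,\ldots,n-1\}\times\{1,\ldots,n\}$ onto $\{1,\ldots,n^{2}\}$ (unique quotient--remainder decomposition), so each matching yields a distinct $n$-subset $K(\pi)\subseteq\{1,\ldots,n^{2}\}$, and $w(\pi)=\sum_{k\in K(\pi)}\sqrt{q^{2}+k^{2}}$. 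It thus suffices to prove that for any two distinct $n$-subsets $S\ne T$ of $\{1,\ldots,n^{2}\}$, $\sum_{k\in S}\sqrt{q^{2}+k^{2}}\ne\sum_{k\in T}\sqrt{q^{2}+k^{2}}$.

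Suppose the contrary. Cancelling common elements yields disjoint nonempty $S,T$ of equal size with $\sum_{k\in S}\sqrt{q^{2}+k^{2}}=\sum_{k\in T}\sqrt{q^{2}+k^{2}}$, i.e.\ a nontrivial $\pm 1$-linear dependence over $\mathbb{Q}$ among $\{\sqrt{q^{2}+k^{2}}:k\in S\cup T\}$. Each surd is irrational: $n^{8}+k^{2}=m^{2}$ forces $(m-k)(m+k)=n^{8}$ with $m+k>2n^{4}$, hence $m-k<1/2$, impossible. By the classical criterion that $\{\sqrt{a_{1}},\ldots,\sqrt{a_{r}}\}$ are $\mathbb{Q}$-linearly independent iff the $a_{i}$'s have pairwise distinct squarefree parts---together with the elementary fact that two positive integers share a squarefree part iff their product is a perfect square---the claim reduces to showing that no pair $k\ne k'\in\{1,\ldots,n^{2}\}$ makes $(q^{2}+k^{2})(q^{2}+k'^{2})$ a perfect square.

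\textbf{Main obstacle.} Proving this no-perfect-square statement is the principal technical step. Using $(q^{2}+k^{2})(q^{2}+k'^{2})=(q^{2}+kk')^{2}+q^{2}(k-k')^{2}$, a hypothetical $c^{2}=(q^{2}+kk')^{2}+q^{2}(k-k')^{2}$ gives $(c-q^{2}-kk')(c+q^{2}+kk')=q^{2}(k-k')^{2}$. Writing $A:=c-q^{2}-kk'$ and $B:=c+q^{2}+kk'$, the bound $A\cdot 2q^{2}\le AB=q^{2}(k-k')^{2}\le n^{12}$ forces $A\le n^{4}/2$; the resulting size constraint on the equation $A(A+2kk')=n^{8}\bigl((k-k')^{2}-2A\bigr)$ collapses it to the pair $A(A+2kk')=n^{8}$ and $(k-k')^{2}-2A=1$. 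Setting $p:=k-k'$ and $s:=k+k'$ (both odd, $p\ge 3$, $s\le 2n^{2}$, $s\ge p+2$), this in turn becomes the Diophantine equation $(p^{2}-1)(s^{2}-1)=4n^{8}$. Ruling this out uniformly is the crux: a $2$-adic argument eliminates odd $n$ entirely (since $8\mid p^{2}-1$ and $8\mid s^{2}-1$ force $16\mid n^{8}$), and the even $n$ case is handled by enumerating small values of $p$ and showing that $s^{2}-1=4n^{8}/(p^{2}-1)$ cannot also be of the form $s^{2}-1$ for an odd $s$ in the allowed range. This finite but intricate case analysis, and in particular the verification that relies essentially on the specific choice $q=n^{4}$, is the main work of the proof.
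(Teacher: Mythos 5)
Your proposal follows the same route as the paper's proof for most of its length: parametrize edges by $k=jn-i\in\{1,\dots,n^2\}$ (injectively, as you correctly note), invoke the classical fact that pairwise $\mathbb{Q}$-linear independence of square roots of integers implies full linear independence, and thereby reduce the claim to showing that $(q^2+k^2)(q^2+k'^2)$ is never a perfect square for $k\neq k'$. Your algebra up to and including the derivation of $A(A+2kk')=n^8$ with $(k-k')^2-2A=1$, and hence $(p^2-1)(s^2-1)=4n^8$ for $p=|k-k'|$, $s=k+k'$, is correct. But the proof stops there: for even $n$ you offer only ``enumerating small values of $p$'' and a ``finite but intricate case analysis,'' which is not a proof and is not even a uniformly bounded case analysis, since $p$ ranges up to about $n^2$. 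As written, the crux step is a genuine gap.

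The gap is immediately fillable, and neither the $2$-adic argument nor any enumeration is needed: since $k,k'\in\{1,\dots,n^2\}$ are distinct, $p\le n^2-1$ and $s\le 2n^2-1$, so $(p^2-1)(s^2-1)<n^4\cdot 4n^4=4n^8$, and the equation has no solutions at all. (A smaller issue: your irrationality argument for a single surd is off --- from $m+k>2n^4$ one gets only $m-k<n^4/2$, not $m-k<1/2$; the correct route is $m\ge n^4+1$, hence $k^2=m^2-n^8\ge 2n^4+1$, contradicting $k\le n^2$. In any case this is subsumed by the pairwise-product criterion.) For comparison, the paper reaches the ``no perfect square'' conclusion more directly: an integer square in $[q^4,(q^2+n^4)^2]$ must equal $(q^2+\alpha)^2$ with $0\le\alpha\le n^4$, and reading the identity $q^2(\Delta_e^2+\Delta_{e'}^2)+\Delta_e^2\Delta_{e'}^2=2q^2\alpha+\alpha^2$ as an equality of two-digit representations in base $q^2$ forces $\Delta_e^2+\Delta_{e'}^2=2\alpha$ and $\Delta_e^2\Delta_{e'}^2=\alpha^2$, hence $\Delta_e=\Delta_{e'}$, contradicting the injectivity of $(i,j)\mapsto jn-i$. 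Both arguments ultimately exploit the same feature of the choice $q=n^4$, namely that $q^2$ dominates all the cross terms.
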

\begin{proof}

Now, we will show that weights of any 
two distinct perfect matchings are different. 
Recall that edge weights are square roots of integers. 
We will argue that the edge weights
are linearly independent over rationals, which immediately
implies that any two different subsets of edges cannot have equal weights. 
It is known that to show linear independence of a set of square roots of integers,
it suffices to show that they are \emph{pairwise}
linearly independent 
(see, for example, \cite{CarOSul07}). 
So, now we just argue that the edge weights are pairwise linearly independent. 

For the sake of contradiction, 
suppose we have two edges $e$ and $e'$, whose weights
are linearly dependent.
Then we have $a w(e) = b w(e')$
for some integers $a$ and $b$. 
From here we get that $w(e)^2 w(e')^2 = (a/b)^2 w(e)^4$.
That is, the product $w(e)^2 w(e')^2$ is square of a rational number.
Since it is an integer, it must be square of an integer. 
From our construction, for any edge $e$, we have $q \leq w(e) \leq \sqrt{q^2 + n^4}$. 
Hence, $w(e)^2 w(e')^2$ must be square of an integer which between $q^2$ and $q^2 + n^4$. Moreover, we notice that only one edge, $e$ or $e'$, can be equal to $\sqrt{q^2 + n^4}$. So we can write $w(e)^2 w(e')^2 = (q^2 + \alpha)^2$ for some integer $ 0 \leq \alpha < n^4$.

For any edge $e$,  let us denote by $\Delta_e$,
the difference in the $y$ coordinates of the 
two endpoints of the edge. 
Then, the weight of an edge $e$ can be written as $w(e) = \sqrt{q^2 + \Delta_e^2}$. 
Now, we have 
\[ w(e)^2 w(e')^2 = (q^2 + \Delta_e^2 )(q^2 + \Delta_{e'}^2) = (q^2 + \alpha)^2. \]
Equivalently,
\[ q^2 ( \Delta_{e}^2  + \Delta_{e'}^2) + \Delta_{e}^2  \Delta_{e'}^2
= 2 q^2 \alpha + \alpha^2. \]
Observe that $\Delta_{e}^2  \Delta_{e'}^2 < n^8 = q^2$ (from construction as both $\Delta_{e}$ and  $\Delta_{e'}$ can not be equal to $n^2$)
 and also $\alpha^2 < n^8 = q^2$.
 Hence, we conclude from above that 
 \[ \Delta_{e}^2  + \Delta_{e'}^2 = 2 \alpha \text{ and }
 \Delta_{e}^2  \Delta_{e'}^2
=  \alpha^2. \]
This implies that  $\Delta_{e} = \Delta_{e'}$.

Now, we will argue that for any two distinct 
edges, we have $\Delta_{e} \neq  \Delta_{e'}$,
which will give us a contradiction. 
Indeed for the edge $(u_i,v_j)$, 
we have $\Delta_e =  jn -i$, which comes from a unique
choice of $0\leq i \leq n-1$  and $1\leq j \leq n$.

\end{proof}

\begin{claim}
In the geometric bipartite graph from Construction~\ref{con:bipartitegraph},
there are two perfect matchings whose weights are different and differ by at most
$1/(n-1)!$.
\label{cla:TwoClosePM}
\end{claim}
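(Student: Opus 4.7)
The plan is to use a pigeonhole argument: all $n!$ perfect matchings have distinct weights by Claim~\ref{cla:distinct}, and we will argue that these weights are packed into a window of length $O(n)$. Since the number of matchings grows factorially while the window grows only linearly, two of the weights must be extremely close, giving the claimed gap of at most $1/(n-1)!$.

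First, I would bound the weight of a single edge. For any edge $e=(u_i,v_j)$, writing $\Delta_e$ for the difference of $y$-coordinates as in the proof of Claim~\ref{cla:distinct}, we have $w(e)=\sqrt{q^2+\Delta_e^2}$ with $\Delta_e \leq n^2$ and $q=n^4$. Using the identity
\[
\sqrt{q^2+\Delta_e^2}-q \;=\; \frac{\Delta_e^2}{\sqrt{q^2+\Delta_e^2}+q} \;\leq\; \frac{\Delta_e^2}{2q} \;\leq\; \frac{n^4}{2n^4} \;=\; \tfrac12,
\]
so every edge weight lies in the interval $[q,\,q+\tfrac12]$. Summing over the $n$ edges of any perfect matching $M$, the weight $\mathcal{C}(M)$ lies in the interval $[nq,\,nq+n/2]$, a window of length $n/2$.

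Next, I would invoke Claim~\ref{cla:distinct}, which guarantees that the $n!$ perfect matchings of the graph have pairwise distinct weights. Sorting these $n!$ distinct values inside an interval of length $n/2$, the average consecutive gap is $(n/2)/(n!-1)$, so some consecutive pair of weights must differ by at most this amount. A quick check shows that for $n\geq 2$,
\[
\frac{n/2}{n!-1} \;\leq\; \frac{1}{(n-1)!},
\]
since this is equivalent to $n!/2 \leq n!-1$, i.e., $n!\geq 2$. Picking the two matchings realizing this gap yields two perfect matchings whose weights are distinct and differ by at most $1/(n-1)!$.

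The only delicate point is the choice of the parameter $q=n^4$: it must be large enough that the total weight window $n/2$ is small, yet the coordinates must still fit on a polynomially bounded integer grid. The value $q=n^4$ is already fixed in Construction~\ref{con:bipartitegraph} and does both jobs simultaneously, so no additional tuning is needed. Everything else is routine arithmetic plus pigeonhole.
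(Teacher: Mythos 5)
Your proposal is correct and follows essentially the same route as the paper: Claim~\ref{cla:distinct} gives $n!$ distinct matching weights, all packed into an interval of length $O(n)$ around $nq$, and pigeonhole forces two of them within $1/(n-1)!$. In fact your edge-weight bound $w(e)\le q+\tfrac12$ (hence window $n/2$) is slightly sharper than the paper's stated bound $n(n^4+1)$, which makes the final inequality $\frac{n/2}{n!-1}\le\frac{1}{(n-1)!}$ check out cleanly.
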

\begin{proof}
 From Claim~\ref{cla:distinct},
all $n!$ perfect matchings have distinct weights. 
From the construction, any perfect matching has its weight between $n^5 $ and $n\sqrt{n^8 + n^4} \leq n(n^4+1) $.
The bound follows from the pigeonhole principle.
\end{proof}

Now, consider the two perfect matchings from Claim~\ref{cla:TwoClosePM}, say $M_1 $ and $M_2$,
whose weights differ by at most $1/(n-1)!$ . 
Let $M_1$ be the one with a smaller weight.
The union of two perfect matchings $M_1 \cup M_2$ is a set of vertex-disjoint cycles and edges. 
We are going to ignore the common edges between $M_1$ and $M_2$.
Let $(e_1, e_2, \dots, e_{2\ell})$ be the sequence of edges generated from the cycles in $M_1 \cup M_2$ as follows: arrange the cycles in an arbitrary order. 
For each cycle, start from that edge in $M_1$ which has its left endpoint with minimum $y$-coordinate, and traverse along the cycle till we hit the starting vertex. 
Note that the sequence $(e_1, e_2, \dots, e_{2\ell})$ has edges alternating from $M_1$ and $M_2$. 
The new graph will be constructed by ``unrolling'' these cycles. 
The construction will be such that edges outside these cycles will be long, and hence, will not be a part of
  any minimum weight perfect matching. 
Recall that for any edge $e=(u_i,v_j)$ (Construction~\ref{con:bipartitegraph}),
we denote by $\Delta_e$ the difference in the $y$-coordinates of the endpoints, i.e., $ jn-i$. 
\begin{construction}
 Consider the vertex $t_0$ at $(0,  0)$. 
 Let $y_0 = 0$. 
 For $1 \leq k \leq 2 \ell$, we place
 the vertex $t_k$ at $( kq, y_k)$, where
 \begin{itemize}
 \item $y_k = y_{k - 1} + \Delta_{e_{k}}$ if $k$ is odd
 \item $y_k = y_{k - 1} - \Delta_{e_{k}}$ if $k$ is even
 \end{itemize}
 We add three more vertices: $s_{0}$ at $(0,-2\ell q)$,
 $s_{1}$ at $(\ell q, -2\ell q)$, and $s_{2}$ at $(2\ell q,-2\ell  q)$.
 See Figure~\ref{fig:unroll}.
 \label{con:unroll}
\end{construction}

   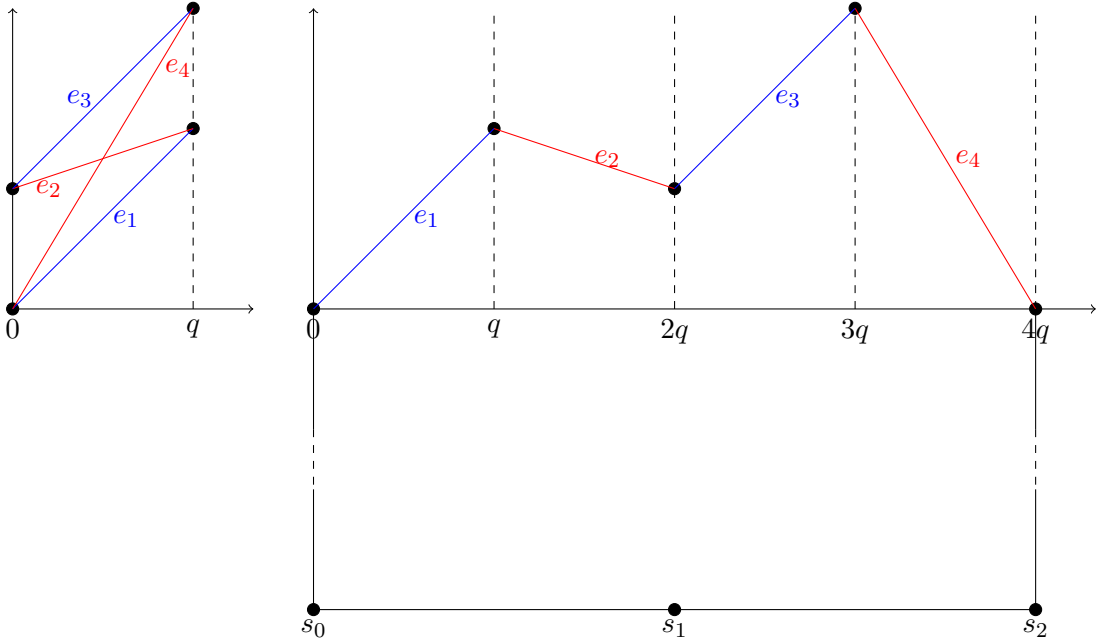
\begin{figure}[h!]
		\centering 
		\begin{tikzpicture}[scale=0.8]
		\draw[->] (0,0)--(0,5);
		\draw[dashed] (3,0)--(3,5);
		\draw[->] (0,0)--(4,0);		

		\node at (0,0) [below] {$0$}; 
		\node at (3,0) [below] {$q$}; 

		\draw [fill] (0,0) circle [radius=0.1];
		\draw [fill] (3,3) circle [radius=0.1];			
		\draw [fill] (0,2) circle [radius=0.1];
		\draw [fill] (3,5) circle [radius=0.1];
		
		\draw[blue] (0,0)-- node [right] {$e_1$} (3,3);
		\draw[red] (3,3)--(0,2);
		\draw[blue] (0,2)--node [left] {$e_3$} (3,5);
		\draw[red] (3,5)--(0,0);

		\node[red] at (0.6,2) {$e_2$};
		\node[red] at (2.75,4)  {$e_4$};
		
		\draw [fill] (5,0) circle [radius=0.1];
		\draw [fill] (8,3) circle [radius=0.1];			
		\draw [fill] (11,2) circle [radius=0.1];
		\draw [fill] (14,5) circle [radius=0.1];
		\draw [fill] (17,0) circle [radius=0.1];		
		
		\draw[->] (5,0)--(5,5);
		\draw[dashed] (8,0)--(8,5);
		\draw[dashed] (11,0)--(11,5);
		\draw[dashed] (14,0)--(14,5);
		\draw[dashed] (17,0)--(17,5);

		\draw[->] (5,0)--(18,0);		

		\node at (5,0) [below] {$0$}; 
		\node at (8,0) [below] {$q$}; 
		\node at (11,0) [below] {$2q$}; 
		\node at (14,0) [below] {$3q$}; 
		\node at (17,0) [below] {$4q$}; 
		
		\draw[blue] (5,0)-- node [right] {$e_1$} (8,3);
		\draw[red] (8,3)-- node [right] {$e_2$} (11,2);
		\draw[blue] (11,2)--  node [right] {$e_3$} (14,5);
		\draw[red] (14,5)-- node [right] {$e_4$} (17,0);

		\draw (17,0)--(17,-2); 
		\draw[dashed] (17,-2)--(17,-3); 
		\draw (17,-3)--(17,-5); 	
		\draw (17,-5)--(5,-5);
		\draw (5,0)--(5,-2); 
		\draw[dashed] (5,-2)--(5,-3); 
		\draw (5,-3)--(5,-5); 	

        \draw [fill] (17,-5) circle [radius=0.1];		
		\draw [fill] (5,-5) circle [radius=0.1];		
		\draw [fill] (11,-5) circle [radius=0.1];		
		
		\node at (5,-5) [below] {$s_0$};
		\node at (11,-5) [below] {$s_1$};
		\node at (17,-5) [below] {$s_2$};				

\end{tikzpicture}

	\caption{The left hand side figure shows a cycle in the union of two perfect matchings. The right hand side figure shows how we ``unroll" this cycle. }
\label{fig:unroll}
\end{figure}

Corresponding to perfect matchings $M_1$ and $M_2$,
here we will have prefect matchings $M_1'$ and $M_2'$ as
\[M_1' = \{ e_1, e_3, \dots, e_{2\ell-1},  (t_{2 \ell}, s_2), (s_0,s_1)\}\]
\[M_2'  = \{ e_2, e_4, \dots, e_{2\ell},  (t_{0}, s_0), (s_1,s_2)\} .\]
The following are easy observations about 
Construction~\ref{con:unroll}.

\begin{enumerate}[(I)]

   \item The edge lengths of $e_1, e_2, \dots, e_{2\ell}$  are exactly same as their lengths in Construction~\ref{con:bipartitegraph}. 
    
    \item $y_k \geq 0$ for each $1\leq k \leq 2\ell$, because for each cycle, the cycle traversal starts from the lowest $y$ coordinate on the left. Moreover, $y_{2\ell}$ must be zero, because any cycle traversal ends at the starting vertex. 

    \item Any pair of vertices are at least distance $q$ apart.

    \item $w(M_1') = w(M_1) + 3\ell q $ and $w(M_2') = w(M_2) + 3\ell q $. 
\end{enumerate}

\begin{claim}
The minimum weight perfect matching in Construction~\ref{con:unroll} is $M_1'$,  with weight   $w(M_1) + 3\ell q $. 
\label{cla:M1min}
\end{claim}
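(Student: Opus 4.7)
The plan is to show that every perfect matching $M$ of the graph in Construction~\ref{con:unroll} has weight at least $w(M_1)+3\ell q$; by Observation~(IV) this equals $w(M_1')$, so $M_1'$ is optimal. I case-split on how the three auxiliary vertices $s_0,s_1,s_2$ are matched by $M$. Since $s_0,s_2$ lie on the right side of the bipartition and $s_1$ on the left, the edges of $M$ incident to $\{s_0,s_1,s_2\}$ fall into two types: (A) exactly one of $(s_0,s_1)$ or $(s_1,s_2)$ is in $M$, together with one \emph{$s$-$t$} edge matching the remaining right vertex to some $t_{2j}$; or (B) all three $s_i$ are matched to distinct $t$-vertices via three $s$-$t$ edges. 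Two metric facts drive the analysis: each $s$-$s$ edge has length exactly $\ell q$, and each $s$-$t$ edge has $|\Delta y|\ge 2\ell q$ hence length at least $2\ell q$. Moreover, since $q=n^4$ and $|\Delta_e|\le n^2$, each edge of $M_1$ has length $\sqrt{q^2+\Delta_e^2}\le q+\tfrac12$, so $w(M_1)\le \ell q+\ell/2$ and $w(M_1')\le 4\ell q+\ell/2$.

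Case~(B) is immediate: the three $s$-$t$ edges alone contribute at least $6\ell q$, already exceeding $w(M_1')$. In Case~(A), by symmetry assume $(s_0,s_1),(s_2,t_{2j})\in M$; the symmetric sub-case $(s_1,s_2),(s_0,t_{2j})\in M$ is handled identically and (in its extremal sub-sub-case) realizes $M_2'$ of weight $w(M_2)+3\ell q>w(M_1')$. A direct computation yields $d(s_2,t_{2j})^2\ge 4q^2(\ell-j)^2+4\ell^2q^2$, uniquely minimized at $j=\ell$ (where $y_{2\ell}=0$ by Observation~(II)) with value $2\ell q$. If $j\ne\ell$, then $(\ell-j)^2\ge 1$ forces $d(s_2,t_{2j})\ge 2q\sqrt{1+\ell^2}\ge 2\ell q+q/(\ell+1)$. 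Combining with the $\ell q$ cost of $(s_0,s_1)$ and a $1$D transportation lower bound of $\ell q$ on the remaining $t$-$t$ matching (obtained by projecting to the $x$-axis and pairing sorted $x$-coordinates), we obtain $w(M)\ge 4\ell q+q/(\ell+1)>w(M_1')$, where the strict inequality uses $q/(\ell+1)>\ell/2$, which holds since $q=n^4$ and $\ell\le n$.

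The delicate sub-case is $j=\ell$: here the $s$-edges contribute exactly $3\ell q$, so it remains to show that the minimum $t$-$t$ matching on the balanced bipartite graph $\{t_0,t_1,\dots,t_{2\ell-1}\}$ has weight at least $w(M_1)$, attained by the identity pairing $\{(t_{2i},t_{2i+1}):0\le i<\ell\}=\{e_1,e_3,\dots,e_{2\ell-1}\}$ (whose weight is exactly $w(M_1)$ by Observation~(I)). Parameterize any such matching by a permutation $\sigma$ on $\{0,\dots,\ell-1\}$ via $t_{2i}\mapsto t_{2\sigma(i)+1}$; then the total $x$-distance is $q\sum_i|2\sigma(i)+1-2i|$, a sum of $\ell$ odd positive integers whose signed counterpart $\sum_i(2(\sigma(i)-i)+1)$ equals $\ell$. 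Hence the absolute sum equals $\ell$ exactly when $\sigma=\mathrm{id}$ and is at least $\ell+2$ otherwise, by the parity of sums of $\ell$ odd integers. Thus any non-identity $\sigma$ gives matching weight at least $(\ell+2)q=\ell q+2n^4$, strictly exceeding $w(M_1)\le \ell q+\ell/2$. The main obstacle is precisely this final parity/projection step, and it succeeds because of the width gap $q=n^4\gg n^2\ge |\Delta_e|$ hard-wired into Construction~\ref{con:unroll}: a single extra horizontal hop of length $2q$ overwhelms any $y$-savings achievable by a non-identity pairing.
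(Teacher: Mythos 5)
Your proof is correct and follows essentially the same route as the paper's: a case analysis on how $s_0,s_1,s_2$ are matched, with each case compared against the bound $w(M_1')\le 4\ell q+\ell/2$, using the facts that every $s$--$t$ edge costs at least $2\ell q$, that a ``misplaced'' $s$--$t$ edge costs an extra $\Omega(q/\ell)$, and that the residual $t$--$t$ matching costs at least $\ell q$. The only substantive difference is in the critical sub-case where the $s$-edges cost exactly $3\ell q$: your permutation-parity argument showing any non-identity $t$--$t$ pairing incurs horizontal cost at least $(\ell+2)q$ replaces the paper's observation that such a matching must contain an edge $(t_k,t_j)$ with $\lvert j-k\rvert\ge 3$, and both yield the same bound.
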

\begin{proof}
Recall that weight of any edge $e_k$ is at most $\sqrt{q^2 + n^4} = \sqrt{n^8 + n^4} \leq n^4 + 1/2 = q +1/2$. 
Hence, $w(M'_1) \leq \ell (q + 1/2) + 3 \ell q = \ell(4 q + 1/2)$.
We have already assumed that $M_2'$ has weight higher than $M'_1$.
Now, consider any perfect matching $M$ other than $M'_1$ and $M_2'$. We will consider different cases and argue that 
in each case $M$ has a larger weight. 

\begin{itemize}
\item  If $M$ matches  $ s_1$ with one of the $t_k$ vertices,
the weight of that edge will be at least $2 \ell q$. 
The vertices $s_0$ and $s_2$ will either match with each other or to some $t_k$ vertices. 
In either case, they will contribute at least $2 \ell q$ to the weight. 
The remaining vertices must have at least $\ell - 3$ edges, each with weight at least $q$. 
Hence, the total weight will be at least $5\ell q - 3q$,
 which is larger than  $w(M'_1)$.

\item Consider the case when $M$ has $(s_1, s_2)$ (the other case is similar) 
and  $s_0$ is matched with one of the $t_k$ vertices, other than $t_0$. 
 %
Recall that $y_k \geq 0$ and $s_0 = (0, -2\ell q)$.
 Then the weight of $(s_0, t_k)$ (for $k>0$)
 is at least $\sqrt{4\ell^2 q^2 + q^2} \geq 2\ell q + q/(4\ell) $. 
 The remaining $2 \ell$ vertices will have $\ell$ matching edges, each with weight at least $q$. 
Hence, the weight of the matching $M$ will be at least $\ell q + 2 \ell q + q/(4 \ell) + q\ell$. 
 This is clearly larger than $w(M'_1) \leq 4\ell q +\ell/2 $ (as $q=n^4$ and $\ell \leq n$).

\item Consider the case when $M$ has $(s_1, s_2)$ and $(s_0, t_0)$. 
These two edges will add up to weight $3 \ell q$. 
Since the matching $M$ is different from $M'_1$ and $M'_2$,
it must match a vertex $t_k$ with another vertex $t_j$ such that $j \neq \{k-1, k+1\}$. 
Then $\lvert j-k \rvert$  must be at least $3$, because the graph is bipartite.  The edge $(t_k, t_j)$ will have weight at least $3q$.
The other $\ell -1$ edges will have weight at least $q$. 
Hence, the total weight is at least $4 \ell q + 2q$, which is again larger than $w(M_1')$.

\item The other cases when $M$ has $(s_0, s_1)$ matched are similar to the above two cases. 

\end{itemize}

\end{proof}

Now, we finally come to our main claim. 

\begin{claim}
In the geometric bipartite graph from Construction~\ref{con:unroll},
the difference between the minimum weight perfect matching and the perfect matching with the next higher weight 
is at most $1/(n-1)!$.
\label{cla:MinClosePM}
\end{claim}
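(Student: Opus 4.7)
The plan is to simply chain together the results we have already established. From Claim~\ref{cla:M1min}, we know that $M_1'$ is the minimum weight perfect matching in Construction~\ref{con:unroll}, with weight exactly $w(M_1) + 3\ell q$. By observation (IV), we also have $w(M_2') = w(M_2) + 3\ell q$, so the additive $3\ell q$ offset cancels out when we take the difference, giving $w(M_2') - w(M_1') = w(M_2) - w(M_1)$.

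Next, I would invoke Claim~\ref{cla:TwoClosePM}, which tells us that $M_1$ and $M_2$ were chosen in Construction~\ref{con:bipartitegraph} precisely so that $0 < w(M_2) - w(M_1) \leq 1/(n-1)!$. Combining this with the previous equality yields
\[
0 < w(M_2') - w(M_1') \leq \frac{1}{(n-1)!}.
\]
Since $M_2'$ is a perfect matching in Construction~\ref{con:unroll} distinct from $M_1'$, the perfect matching of next-higher weight after $M_1'$ has weight at most $w(M_2')$, and therefore differs from $w(M_1')$ by at most $1/(n-1)!$.

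The only thing worth double-checking is observation (IV), which is the bookkeeping step that the two ``cap'' edges $(t_0,s_0)$, $(s_1,s_2)$ in $M_2'$ and $(t_{2\ell},s_2)$, $(s_0,s_1)$ in $M_1'$ each contribute the same total length $3\ell q$; once that is verified, the claim is immediate. I don't anticipate any genuine obstacle here — all the heavy lifting was done in establishing Claim~\ref{cla:distinct} (distinct weights), Claim~\ref{cla:TwoClosePM} (pigeonhole gap), and Claim~\ref{cla:M1min} (verifying $M_1'$ is the true minimum in the unrolled graph). This final claim is essentially a one-line corollary.
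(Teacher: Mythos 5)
Your proposal is correct and follows exactly the same route as the paper's own proof: invoke Claim~\ref{cla:M1min} to identify $M_1'$ as the minimum, use observation (IV) to cancel the $3\ell q$ offsets so that $w(M_2') - w(M_1') = w(M_2) - w(M_1)$, and then apply Claim~\ref{cla:TwoClosePM}. The only difference is that you spell out the (correct) final step that the next-higher-weight matching is at most $w(M_2')$, which the paper leaves implicit.
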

\begin{proof}
From Claim~\ref{cla:M1min}, we know that $M_1'$ is the minimum weight perfect matching. 
We had observed that $w(M_1') - w(M_2') = w(M_1)-w(M_2)$.
From Claim~\ref{cla:TwoClosePM}, this difference is at most $1/(n-1)!$.
\end{proof}

\section{Geometric Bipartite Matching}
\label{sec: geometric matching}

In this section, we study the parallel complexity of 
$\delta$-\textsf{EWPM}, and show that the problem lies in the class \textsf{NC}, 
for $\delta = 1/\poly(n)$. 

First of all, we assume that no three vertex points are colinear.  
 There is a simple fix to break colinearity by way of small perturbations in coordinates. 
 Specifically, for the $i$th vertex at point $(x_i, y_i)$, let us 
 assign its new coordinates to be $(x_i + i/K, y_i + i^2/K)$,
 where $K$ is a large enough number.
  This specific perturbation guarantees that no three points are colinear. 
  To see this, consider $i$th, $j$th, and $k$th vertices after the perturbation. 
  They will be colinear if and only if the following matrix has zero determinant. 
  \[
\begin{pmatrix}
  1 & 1 & 1 \\
  x_i + i/K &  x_j + j/K &   x_k + k/K \\
    y_i + i^2/K &  y_j + j^2/K &   y_k + k^2/K 
  \end{pmatrix}
\]

Consider the coefficient of the term $1/K^2$ in the determinant, which is $(i-j)(j-k)(k-i) \neq 0$.
Other terms in the determinant will be an integer multiple of $1/K$ and hence, cannot cancel this term,
when $K$ is large enough ($\poly(n)$). 
  This perturbation can cause additive error in the weights of perfect matchings, 
  but the error will remain bounded by $O(n^3/K)$. 
  Thus, the minimum weight perfect matching with respect to perturbed coordinates
  will be an EWPM up to a $1/\poly(n)$ additive error.  
  To make the coordinate integral, we can multiply them by $K$. 
 Now, give a brief overview of our ideas. 



Our main idea is to design an isolating weight assignment for the given graph and 
then use the MVV algorithm. 
Let $G$ be a complete bipartite graph of two sets of points $A$ and $B$ in $\mathbb{R}^2$. 
The MVV theorem asserts that if a graph has an isolating weight assignment, then the task of finding the minimum weight perfect matching in $G$ can be accomplished in \textsf{NC}. 


To construct an isolating weight assignment, we adopt the weight scheme introduced by Tewari and Vinodchandran~\cite{tewari2012greens}, which was designed specifically for planar bipartite graphs.
However, note that our graph is the complete bipartite graph and hence, far from planar. 
Our first key observation is that the union of minimum weight perfect matchings 
(with respect to Euclidean distances or even approximate distances) 
forms a planar subgraph. 
Then one can hope to use the Tewari and Vinodchandran~\cite{tewari2012greens} weight scheme on this planar subgraph. 
However, we cannot really compute this planar subgraph (i.e., the union of minimum weight perfect matchings). 
What proves to be useful is the fact that the Tewari-Vinodchandran weight scheme
is black-box, i.e, it does not care what is the underlying planar graph, it only needs to know the points in the plane where vertices are situated.
Finally, we combine the approximate distance function with the Tewari-Vinodchandran weight function on a smaller scale, and apply it on the complete bipartite graph. 
We show that this combined weight function is indeed isolating.



Towards showing the planarity of the union of minimum weight perfect matchings, 
first we establish a simple fact that
for any convex quadrilateral,
 there is a significant difference between the sum of diagonals and the sum of any opposite sides.

\begin{lemma} 
Consider a convex quadrilateral formed by a quadruple in an integer grid of size $N\times N$. The sum of lengths of its diagonals is larger than the sum of any two opposite sides.
And the gap between the two sums is at least $\frac{1}{4N^4}$.
\label{lem: Gap bound}
\end{lemma}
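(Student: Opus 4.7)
My plan is to write the excess as a sum of two triangle-inequality slacks and then quantify each slack using the integer grid. Let $P$ be the intersection of the diagonals $AC$ and $BD$. Then
\[
(|AC|+|BD|)-(|AB|+|CD|) \;=\; \bigl(|AP|+|PB|-|AB|\bigr) + \bigl(|PC|+|PD|-|CD|\bigr),
\]
and each bracket is strictly positive by the triangle inequality in $\triangle APB$ and $\triangle PCD$. This already settles the first assertion of the lemma; the real work is to lower bound each slack.

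For the first slack I would rationalise and apply the law of cosines in $\triangle APB$, writing $\theta:=\angle APB$:
\[
|AP|+|PB|-|AB| \;=\; \frac{2|AP||PB|(1+\cos\theta)}{|AP|+|PB|+|AB|}.
\]
I then parametrise $P$ on the two diagonals by $|AP|=\lambda|AC|$ and $|PB|=\mu|BD|$; the standard line-intersection formula gives $\lambda=\mathrm{Area}(ABD)/\mathrm{Area}(ABCD)$ and $\mu=\mathrm{Area}(ABC)/\mathrm{Area}(ABCD)$. Using $\vec{PA}=-\lambda\vec{AC}$, $\vec{PB}=-\mu\vec{BD}$, the numerator rewrites as $\lambda\mu\bigl(|AC||BD|+\vec{AC}\cdot\vec{BD}\bigr)$. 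The key identity is the planar Lagrange identity
\[
|AC|^2|BD|^2-(\vec{AC}\cdot\vec{BD})^2 \;=\; (\vec{AC}\times\vec{BD})^2 \;=\; \bigl(2\,\mathrm{Area}(ABCD)\bigr)^2,
\]
which, after dividing by $|AC||BD|-\vec{AC}\cdot\vec{BD}\le 2|AC||BD|$, yields $|AC||BD|+\vec{AC}\cdot\vec{BD}\ge 2\,\mathrm{Area}(ABCD)^2/(|AC||BD|)$. Combined with $\lambda\mu\cdot\mathrm{Area}(ABCD)^2=\mathrm{Area}(ABD)\,\mathrm{Area}(ABC)$, this produces the clean bound
\[
|AP||PB|(1+\cos\theta) \;\ge\; \frac{2\,\mathrm{Area}(ABD)\,\mathrm{Area}(ABC)}{|AC||BD|}.
\]

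The integer-grid hypothesis now enters through two elementary facts: every sub-triangle with vertices on the grid has area at least $1/2$, so $\mathrm{Area}(ABD)\,\mathrm{Area}(ABC)\ge 1/4$, and trivially $|AC||BD|\le 2N^2$, $|AP|+|PB|+|AB|\le 3\sqrt{2}\,N$. Assembling these estimates, the first slack is at least $1/(6\sqrt{2}\,N^3)$, and an entirely symmetric computation in $\triangle PCD$ (using $\mathrm{Area}(BCD),\mathrm{Area}(ACD)$) yields the same bound for the second slack. Summing, the excess is at least $1/(3\sqrt{2}\,N^3)$, which exceeds $1/(4N^4)$ for every $N\ge 2$; the $N=1$ case (only the unit square is possible) is immediate. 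The other opposite-side pair $|AD|+|BC|$ is handled in the same way using $\triangle APD$ and $\triangle BPC$, whose angle at $P$ is $\pi-\theta$, so the Lagrange bound is applied instead to the complementary quantity $|AC||BD|-\vec{AC}\cdot\vec{BD}$. The step I expect to be the main obstacle is the Lagrange-identity combination: once coupled with $\lambda\mu\cdot\mathrm{Area}(ABCD)^2=\mathrm{Area}(ABD)\,\mathrm{Area}(ABC)$, the factor $\mathrm{Area}(ABCD)^2$ miraculously cancels and everything reduces to lower bounding two sub-triangle areas, which the grid handles directly; without this cancellation, the nearly-anti-parallel-diagonal case would make $|AC||BD|+\vec{AC}\cdot\vec{BD}$ too small and break the argument.
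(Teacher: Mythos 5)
Your proof is correct, and it reaches the bound by a genuinely different route than the paper. The paper uses the same first step --- splitting the excess into the two triangle-inequality slacks at the diagonal intersection point --- but then drops a perpendicular from that point onto each side, writes each slack in the form $\abs{AP}(\sec\theta-1)\ge \abs{AP}\tan^2\theta/2$, and finishes with the grid fact that the angle between two non-parallel integer vectors satisfies $\tan^2\theta\ge 1/(4N^4)$; summing and using $\abs{AB}+\abs{CD}\ge 2$ gives exactly $1/(4N^4)$. You instead rationalize each slack with the law of cosines and convert everything, via the Lagrange identity and the diagonal-ratio formula $\lambda=\mathrm{Area}(ABD)/\mathrm{Area}(ABCD)$, into areas of lattice sub-triangles, each at least $1/2$. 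Both arguments ultimately rest on the same fact --- a nonzero $2\times 2$ integer determinant has absolute value at least $1$ --- packaged once as an angular lower bound and once as an area lower bound; your identity $\lambda\mu\cdot\mathrm{Area}(ABCD)^2=\mathrm{Area}(ABD)\,\mathrm{Area}(ABC)$ does make the $\mathrm{Area}(ABCD)^2$ factor cancel as you hoped, and the division by $\abs{AC}\,\abs{BD}-\vec{AC}\cdot\vec{BD}$ is legitimate since the diagonals are non-parallel. Your route is longer and needs the intersection-ratio bookkeeping, but it buys a quantitatively stronger gap of order $1/N^3$ rather than $1/N^4$ (namely $1/(3\sqrt{2}\,N^3)$, which implies the stated bound for $N\ge 2$, with $N=1$ trivial), and it handles the other pair of opposite sides by the clean sign flip $\cos(\pi-\theta)=-\cos\theta$ rather than by repeating the perpendicular construction.
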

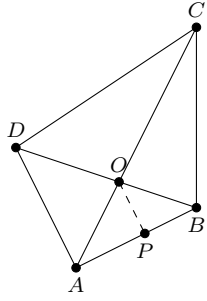
\begin{figure}
\centering
\begin{tikzpicture}[scale=0.8]

\coordinate (A) at (0,0);
\coordinate (B) at (2,1);
\coordinate (C) at (2,4);
\coordinate (D) at (-1,2);
\coordinate (O) at (5/7, 10/7);
\coordinate (P) at  (8/7, 4/7);  

\draw (A) -- (B) -- (C) -- (D) -- cycle;

\draw (A) -- (C);
\draw (B) -- (D);

\draw[dashed] (O) -- (P);

\foreach \point in {A,B, P}
    \filldraw (\point) circle (2pt) node[below] {\scriptsize $\point$};

\foreach \point in {C,D,O}
    \filldraw (\point) circle (2pt) node[above] {\scriptsize $\point$};

\end{tikzpicture}
\caption{A convex quadrilateral with its two diagonals.}
\label{fig:quad}
\end{figure}

\begin{proof}
    Intuitively, the sum of the diagonals will be larger than the sum of any two opposite sides because of triangle
    inequality (the diagonals combined with any two opposite edges form two triangles).
     The significance of this gap arises from the fact that if the points are from a grid and are not collinear, then the angle between any side and the diagonal cannot be infinitely small. 
    Formally, Let the four corners of the quadrilateral be $A, B, C, D$ (in cyclic order). See Figure~\ref{fig:quad}. 
    Let $O$ be the intersection point of the diagonals $AC$ and $BD$ (diagonals always intersect in a convex quadrilateral). 
    By triangle inequality, we have $\abs{AO} + \abs{OB} \geq AB$ and $\abs{CO} + \abs{OD} \geq CD$. Adding the two we get,
    \[\abs{AD} + \abs{BC} \geq \abs{AB} + \abs{CD}.\]
    Now, we lower bound the gap. Let $OP$ be the perpendicular from $O$ to the line $AB$.
    And let $\theta$ be the angle $\angle OAP$.
    \begin{equation}
        \abs{AO} - \abs{AP} = \abs{AP}(\sec \theta -1 ) \geq \abs{AP} \tan^2 \theta /2. 
        \label{eq:AO-AP}
    \end{equation}
    Let us lower bound $\tan \theta$ for $\theta$ being the angle between any three non-colinear points on the $N\times N$ grid. 
    Without loss of generality let $\theta$ be the angle between the integer vectors $(i,j)$ and $(k,\ell)$. 
    The we can write
    \[\tan^2 \theta = 1- \cos^2 \theta = 1 - \frac{(ik+jl)^2}{(i^2+j^2)(k^2+\ell^2)} \geq \frac{(i\ell - jk)^2}{(i^2+j^2)(k^2+\ell^2)} \geq \frac{1}{4N^4}.\]
    The last inequality follows because $i,j,k,l \leq N$ and the difference $i\ell - jk$, being integral, is at least one. 
    From equation~(\ref{eq:AO-AP}), we get $\abs{AO} - \abs{AP} \geq \abs{AP}/(8N^4)$.
    Similarly, we can write $\abs{BO} - \abs{BP} \geq \abs{BP}/(8N^4)$.
    Adding these two we get
    \[ \abs{AO}+\abs{BO} - \abs{AB} \geq \abs{AB}/(8N^4).\]
    We can write a similarly inequality for $\abs{CO}+\abs{DO}-\abs{CD}$. Adding the two, we get
    \[ \abs{AC}+\abs{BD} - \abs{AB}-\abs{CD} \geq (\abs{AB} +\abs{CD})/(8N^4) \geq 2/(8N^4).\]
    The last inequality follows because distance between two integral points is at least 1.

\end{proof}

\old{\begin{lemma}
    With respect to approximate distance, the gap between the sum of diagonals and the opposite sides of a quadrilateral is notably significant If the vertices of the quadrilateral are from $N$ × $N$ grid.
\label{lem: Gap for approx dist}
\end{lemma}

\begin{proof}
    
\end{proof} 
}

\subsection{Union of Near-Minimum Weight Perfect Matchings}\label{sec:union}

In this subsection, we establish our main lemma that in a geometric bipartite graph $G$, 
the union of near-minimum weight perfect matchings forms a planar subgraph of $G$. 
This allows us to use the Tewari and 
Vinodchandran~\cite{tewari2012greens} isolating weight scheme for planar bipartite graphs. 
We first define a near-minimum weight perfect matching. 
\begin{definition}
Let the vertices of the geometric bipartite graph lie in the $N \times N$ integer grid. 
A perfect matching is said to be of near-minimum weight if its weight is less
than $w^* + 1/(8N^4)$, where $w^*$ is the minimum weight of a perfect matching.
\end{definition}

  
\begin{lemma}  
For a geometric bipartite graph $G$ with vertices in the $N \times N$ integer grid,
the union of near-minimum weight perfect matchings forms a planar graph. 
\label{lem: union lemma}
\end{lemma}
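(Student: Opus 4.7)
Argue by contradiction. Suppose two edges $e_1=(a_1,b_1)$ and $e_2=(a_2,b_2)$ of $H$ cross geometrically, and pick near-minimum perfect matchings $M_1 \ni e_1$ and $M_2 \ni e_2$. The crossing forces $\{a_1,b_1,a_2,b_2\}$ into convex position with $e_1,e_2$ as diagonals, so the bipartite opposite sides $f_1=(a_1,b_2)$ and $f_2=(a_2,b_1)$ exist and satisfy, by Lemma~\ref{lem: Gap bound},
\[
w(e_1)+w(e_2) \;\geq\; w(f_1)+w(f_2)+\frac{1}{4N^4}.
\]
The goal is to exhibit a perfect matching of weight strictly less than $w^*$, contradicting minimality. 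When a single near-minimum matching $M$ contains both $e_1,e_2$ (e.g., $M_1=M_2$, or $e_2 \in M_1$, or $e_1 \in M_2$), the local swap $(M \setminus \{e_1,e_2\}) \cup \{f_1,f_2\}$ is a perfect matching of weight at most $w(M) - 1/(4N^4) < w^* + 1/(8N^4) - 1/(4N^4) < w^*$, and we are done.

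\textbf{The main case: uncrossing inside $M_1 \triangle M_2$.} Otherwise $e_1 \in M_1 \setminus M_2$ and $e_2 \in M_2 \setminus M_1$, so both live in $M_1 \triangle M_2$, which decomposes as a disjoint union of alternating cycles. Let $\mathcal{C}$ be the subcollection of those cycles that contain $e_1$ or $e_2$ (one cycle if they share one, two cycles otherwise). Perform a geometric \emph{uncrossing}: set $C' := \mathcal{C} \triangle \{e_1,e_2,f_1,f_2\}$. A direct edge trace shows that $C'$ is a single bipartite cycle on the same vertex set $V(\mathcal{C})$, with $|C'|=|\mathcal{C}|$. Indeed, removing $e_1,e_2$ from the cycles of $\mathcal{C}$ splits them into paths whose endpoints are exactly $\{a_1,b_1,a_2,b_2\}$, and the two bipartite edges $f_1,f_2$ re-stitch these paths into a single bipartite cycle; the verification splits into two routine sub-cases depending on whether $\mathcal{C}$ has one or two cycles.

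\textbf{Averaging to finish.} Since $C'$ is a bipartite even cycle, it admits exactly two perfect matchings on $V(\mathcal{C})$, namely its two alternate-edge selections $N^{(1)},N^{(2)}$. For $i \in \{1,2\}$, set $\hat{M}^{(i)} := N^{(i)} \cup (M_i \setminus \mathcal{C})$. This is a valid perfect matching of $G$, because every $M_i$-edge incident to $V(\mathcal{C})$ is itself an edge of $\mathcal{C}$ (its other endpoint lies in $V(\mathcal{C})$), hence $M_i \setminus \mathcal{C}$ perfectly matches $V \setminus V(\mathcal{C})$. Using $w(N^{(1)})+w(N^{(2)}) = w(C') \leq w(\mathcal{C}) - 1/(4N^4)$ together with $w(\mathcal{C}) = w(\mathcal{C} \cap M_1) + w(\mathcal{C} \cap M_2)$, the weight bookkeeping gives
\[
w(\hat{M}^{(1)}) + w(\hat{M}^{(2)}) \;\leq\; w(M_1)+w(M_2)-\frac{1}{4N^4} \;<\; 2w^* + \frac{1}{4N^4} - \frac{1}{4N^4} \;=\; 2w^*,
\]
so $\min_i w(\hat{M}^{(i)}) < w^*$, the desired contradiction. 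The main technical obstacle is the edge-tracing verification that $C'$ is a single bipartite cycle on $V(\mathcal{C})$; once that is established, the weight computation and averaging close the argument immediately, uniformly handling both the one-cycle and two-cycle sub-cases.
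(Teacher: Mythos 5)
Your proposal is correct and follows essentially the same route as the paper: uncross the crossing pair using Lemma~\ref{lem: Gap bound}, dispose of the single-matching case by a direct swap, and in the two-matching case re-stitch the alternating cycle(s) into two new perfect matchings whose combined weight falls below $2w^*$. Your packaging of the re-stitching via the two alternating matchings of the merged cycle $C'=\mathcal{C}\triangle\{e_1,e_2,f_1,f_2\}$ is just a cleaner, unified way of writing the paper's explicit case-by-case assignment of path edges to $M_1'$ and $M_2'$.
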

\begin{proof}
%
Let $A \cup B$ be the bipartition of the vertices.
We will first show that no two edges in a near-minimum weight perfect matching $M$ cross each other.
For the sake of contradiction, let there be two edges $\{a_1,b_1\}$ and $\{a_2,b_2\}$ in $M$ that cross each other,
where $a_1, a_2 \in A$ and $b_1, b_2 \in B$. 
Since $G$ is a complete bipartite graph, every vertex of set $A$ must have an edge to every vertex of set $B$ in $G$. 
We can construct another matching $M'$ from $M$ by replacing the crossing edges $\{a_1,b_1\}$ and $\{a_2,b_2\}$ with $\{a_1,b_2\}$ and $\{a_2,b_1\}$, respectively.

Note that $(a_1, a_2, b_1, b_2)$ form a convex quadrilateral, since its diagonals $a_1b_1$
and $a_2b_2$ cross each other.
From Lemma~\ref{lem: Gap bound}, 
we know that 
\[ \abs{a_1b_2}+\abs{a_2b_1} \leq \abs{a_1b_1} + \abs{a_2b_2}   - 1/(4N^4).\]
From here, we can conclude that $w(M') \leq w(M) - 1/(4N^4)$,
where $w(M')$ and $w(M)$ are the weights of $M'$ and $M$, respectively.
This contradicts the fact that $M$ is a near-minimum weight perfect matching.

Now, we will show that two edges belonging to two different near-minimum weight perfect
matchings cannot cross. 
%
Consider two such near-minimum weight perfect matchings $M_1$ and $M_2$, 
where the edges $\{a_1,b_1\} \in M_1$ and $\{a_2,b_2\} \in M_2$ cross each other.
%
Observe that the union of these two perfect matchings forms a set of vertex-disjoint cycles, and a set of disjoint edges (which are common to both).
There are two cases: (i) the edges $\{a_1,b_1\}$ and $\{a_2,b_2\}$ are part of 
one of these cycles and (ii) they are part of two different cycles. 
In each of the cases, we will create two new perfect matchings 
with significantly smaller weight,
which will contradict the near-minimumness of $M_1$ and $M_2$. 

Case (i): $\{a_1,b_1\}$ and $\{a_2,b_2\}$ are part of one cycle $C$. See Figure~\ref{fig:crossing}.
Note that the edges of this cycle come alternatingly from $M_1$ and $M_2$ 
(shown in the figure in red and blue colors).

\newcommand{\x}{3}
\newcommand{\y}{5}
\newcommand{\z}{8}
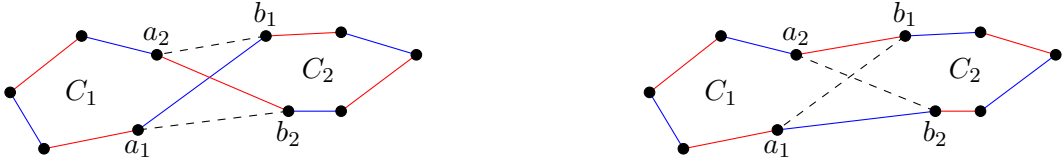
\begin{figure*}[h]
    \centering\begin{tikzpicture}
  \coordinate (a1) at (\x,\y);
  \coordinate (b2) at (\x+2,\y+0.25);
  \coordinate (b1) at (\x+1.7,\y+1.25);
  \coordinate (a2) at (\x+0.25, \y+1);

  \coordinate (a1_1) at (\x-1.25,\y-0.25);
  \coordinate (b1_1) at (\x-1.7,\y+0.5);
  \coordinate (b1_2) at (\x-0.75,\y+1.25);

  \coordinate (a2_1) at (\x+2.7,\y+1.3);
  \coordinate (a2_2) at (\x+3.7,\y+1.0);
  \coordinate (b2_1) at (\x+2.7,\y+0.25);

  \coordinate (C_1) at (\x-0.75,\y+0.5);
  \coordinate (C_2) at (\x+2.4,\y+0.80);


  \coordinate (a11) at (\x+8.5,\y);
  \coordinate (b22) at (\x+10.6,\y+0.25);
  \coordinate (b11) at (\x+10.2,\y+1.25);
  \coordinate (a22) at (\x+8.75, \y+1);

  \coordinate (a11_1) at (\x+7.25,\y-0.25);
  \coordinate (b11_1) at (\x+6.8,\y+0.5);
  \coordinate (b11_2) at (\x+7.75,\y+1.25);
  
  \coordinate (a22_1) at (\x+11.2,\y+1.3);
  \coordinate (a22_2) at (\x+12.2,\y+1.0);
  \coordinate (b22_1) at (\x+11.2,\y+0.25);

  \coordinate (C_11) at (\x+7.75,\y+0.5);
  \coordinate (C_22) at (\x+11,\y+0.80);

    \draw[blue] (a1) -- (b1);
    \draw[red] (a2) -- (b2);
    \draw[red] (a1) -- (a1_1);
    \draw[red] (b1_1) -- (b1_2);
    \draw[blue] (a2) -- (b1_2);
    
    \draw[red] (b1) -- (a2_1);
    \draw[blue] (a2_1) -- (a2_2);
    \draw[blue] (b2) -- (b2_1);

    \draw[blue] (b1_1) -- (a1_1);
    \draw[red] (b2_1) -- (a2_2);
    \draw[black,dashed] (b1) -- (a2);
    \draw[black,dashed] (b2) -- (a1);

    \draw[blue] (a11) -- (b22);
    \draw[red] (a22) -- (b11);
    \draw[red] (a11) -- (a11_1);
    \draw[red] (b11_1) -- (b11_2);
    \draw[blue] (a22) -- (b11_2);
    
    \draw[blue] (b11) -- (a22_1);
    \draw[red] (a22_1) -- (a22_2);
    \draw[red] (b22) -- (b22_1);

    \draw[blue] (b11_1) -- (a11_1);
    \draw[blue] (b22_1) -- (a22_2);
    \draw[black,dashed] (b11) -- (a11);
    \draw[black,dashed] (b22) -- (a22);
    
  \node[anchor=north] at (a1) {$a_1$};
  \node[anchor=south] at (b1) {$b_1$};
  \node[anchor=south] at (a2) {$a_2$};
  \node[anchor=north] at (b2) {$b_2$};

  \node[anchor=north] at (a11) {$a_1$};
  \node[anchor=south] at (b11) {$b_1$};
  \node[anchor=south] at (a22) {$a_2$};
  \node[anchor=north] at (b22) {$b_2$};
  \node[] at (C_1) {$C_1$};
  \node[] at (C_2) {$C_2$};
  \node[] at (C_11) {$C_1$};
  \node[] at (C_22) {$C_2$};

\old{\node[anchor=east] at (a1_1) {$a_1'$};
  \node[anchor=west] at (b1_1) {$b_1'$};
  \node[anchor=west] at (b1_2) {$b_1''$};

  \node[anchor=west] at (a2_1) {$a_2'$};
  \node[anchor=east] at (a2_2) {$a_2''$};

  \node[anchor=east] at (b2_1) {$b_2'$};
}

  \foreach \p in {a1,b1,a2,b2,a1_1, b1_1, b1_2, a2_1, a2_2, b2_1, a11, b11, a22, b22, a11_1, b11_1, b11_2, a22_1, a22_2, b22_1}
    \filldraw (\p) circle (2pt);
    \end{tikzpicture}
    \caption{Construction of $M'_1$ and $M'_2$, when the crossing edges are part of one cycle}
    \label{fig:crossing}
\end{figure*}

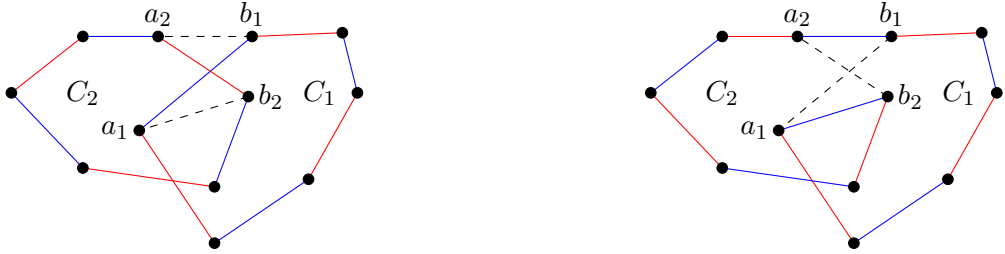
\begin{figure*}[h]
    \centering\begin{tikzpicture}
  \coordinate (a1) at (\x,\y);
  \coordinate (b1) at (\x+1.5,\y+1.25);
  \coordinate (c1) at (\x+1,\y-1.5);
  \coordinate (d1) at (\x+2.25,\y-0.65);
  \coordinate (e1) at (\x+2.9,\y+0.5);
  \coordinate (f1) at (\x+2.7,\y+1.3);  

  \coordinate (a2) at (\x+0.25, \y+1.25);
  \coordinate (b2) at (\x+1.45,\y+0.45);
  \coordinate (c2) at (\x+1,\y-0.75);
  \coordinate (d2) at (\x-0.75,\y-0.5);
  \coordinate (e2) at (\x-1.7,\y+0.5);
  \coordinate (f2) at (\x-0.75,\y+1.25);
    
  \coordinate (C_2) at (\x-0.75,\y+0.5);
  \coordinate (C_1) at (\x+2.4,\y+0.50);

    \draw[blue] (a1) -- (b1);
    \draw[red] (b1) -- (f1);
    \draw[blue] (f1) -- (e1);
    \draw[red] (d1) -- (e1);
    \draw[blue] (c1) -- (d1);
    \draw[red] (a1) -- (c1);
    
    \draw[red] (a2) -- (b2);
    \draw[blue] (b2) -- (c2);
    \draw[red] (c2) -- (d2);
    \draw[blue] (d2) -- (e2);
    \draw[red] (e2) -- (f2);
    \draw[blue] (f2) -- (a2);

    \draw[black,dashed] (b1) -- (a2);
    \draw[black,dashed] (b2) -- (a1);


  \coordinate (a11) at (\x+8.5,\y);
  \coordinate (b11) at (\x+10,\y+1.25);
  \coordinate (c11) at (\x+9.5,\y-1.5);
  \coordinate (d11) at (\x+10.75,\y-0.65);
  \coordinate (e11) at (\x+11.4,\y+0.5);
  \coordinate (f11) at (\x+11.2,\y+1.3);  

  \coordinate (a22) at (\x+8.75, \y+1.25);
  \coordinate (b22) at (\x+9.95,\y+0.45);
  \coordinate (c22) at (\x+9.5,\y-0.75);
  \coordinate (d22) at (\x+7.75,\y-0.5);
  \coordinate (e22) at (\x+6.8,\y+0.5);
  \coordinate (f22) at (\x+7.75,\y+1.25);
    
  \coordinate (C_22) at (\x+7.75,\y+0.5);
  \coordinate (C_11) at (\x+10.9,\y+0.50);

    \draw[blue] (a11) -- (b22);
    \draw[red] (b11) -- (f11);
    \draw[blue] (f11) -- (e11);
    \draw[red] (d11) -- (e11);
    \draw[blue] (c11) -- (d11);
    \draw[red] (a11) -- (c11);
    
    \draw[blue] (a22) -- (b11);
    \draw[red] (b22) -- (c22);
    \draw[blue] (c22) -- (d22);
    \draw[red] (d22) -- (e22);
    \draw[blue] (e22) -- (f22);
    \draw[red] (f22) -- (a22);
    
    \draw[black,dashed] (b22) -- (a22);
    \draw[black,dashed] (b11) -- (a11);

  \node[anchor=east] at (a1) {$a_1$};
  \node[anchor=south] at (b1) {$b_1$};
  \node[anchor=south] at (a2) {$a_2$};
  \node[anchor=west] at (b2) {$b_2$};

  \node[anchor=east] at (a11) {$a_1$};
  \node[anchor=south] at (b11) {$b_1$};
  \node[anchor=south] at (a22) {$a_2$};
  \node[anchor=west] at (b22) {$b_2$};
  \node[] at (C_1) {$C_1$};
  \node[] at (C_2) {$C_2$};
  \node[] at (C_11) {$C_1$};
  \node[] at (C_22) {$C_2$};

  \foreach \p in {a1,b1,a2,b2,c1,c2,d1,d2,e1,e2,f1,f2,a11,b11,a22,b22,c11,c22,d11,d22,e11,e22,f11,f22}
    \filldraw (\p) circle (2pt);
    \end{tikzpicture}
    \caption{Construction of $M'_1$ and $M'_2$, when the crossing edges are part of two different cycles}
    \label{fig:crossing2}
\end{figure*}
We construct two distinct perfect matchings, $M_1'$ and $M_2'$, using $M_1$ and $M_2$. 
Removing the edges $\{a_1,b_1\}$ and $\{a_2,b_2\}$ from cycle $C$ divides it into two parts.
Note that both parts must have even number of edges, since the edges are alternating
between $M_1$ and $M_2$. 
It follows that one of these parts is a path from $a_1$ to $a_2$, let us call it $C_1$.
And the other one 
is a path from $b_1$ to $b_2$, let us call it $C_2$ (as shown in Figure~\ref{fig:crossing}).

Let us put $\{a_1,b_2\}$ into $M_1'$ and $\{a_2,b_1\}$ into $M_2'$.
For the edges in $C_1$, we put  the $M_1$ edges into $M_1'$ and 
 the $M_2$ edges into $M_2'$. 
For the edges in $C_2$ we do the opposite, put the   $M_1$ edges into $M_2'$ 
and the $M_2$ edges into $M_1'$. 
For edges outside of the cycle $C$,
we put edges from $M_1$ into $M_1'$ and edges from $M_2$ into $M_2'$.

Case (ii): $\{a_1,b_1\}$ and $\{a_2,b_2\}$ are part of two different cycles. 
Let $C_1$ and $C_2$ be the paths obtained from removing $\{a_1,b_1\}$ and $\{a_2,b_2\}$
from the two cycles, respectively. See Figure~\ref{fig:crossing2}.
Here again we construct two distinct perfect matchings, $M_1'$ and $M_2'$, using 
a similar uncrossing of edges.
Let us put both $\{a_1,b_2\}$ and  $\{a_2,b_1\}$ into $M_1'$.
For the edges in $C_1$, we put  the $M_1$ edges into $M_1'$ and 
 the $M_2$ edges into $M_2'$. 
For the edges in $C_2$ we do the opposite, put the   $M_1$ edges into $M_2'$ 
and the $M_2$ edges into $M_1'$. 
For edges outside the two cycles,
we put edges from $M_1$ into $M_1'$ and edges from $M_2$ into $M_2'$.

Note that in both Case (i) and Case (ii),
the newly constructed perfect matchings $M_1'$ and $M_2'$ together have 
the same edges as $M_1 \cup M_2$, except for 
$\{a_1,b_1\}$ and $\{a_2,b_2\}$ being replaced
with $\{a_2,b_1\}$ and $\{a_1,b_2\}$.

Let $w_1, w_2, w'_1, w'_2$ be the weights of matchings $M_1, M_2, M'_1, M_2'$, repsectively. 
Then,
\[w_1' + w_2' = w_1+w_2 - \abs{a_1b_1} - \abs{a_2b_2}+  \abs{a_1b_2}+
\abs{a_2b_1}  .\]
From Lemma~\ref{lem: Gap bound}, we have that 
 \[ \abs{a_1b_1} + \abs{a_2b_2} -  \abs{a_1b_2}
-\abs{a_2b_1} \geq  1/(4N^4). \]
Thus, 
\[w_1' + w_2' \leq w_1+w_2 - 1/(4N^4) .\]
Let $w^*$ be the weight of the minimum weight perfect matching. 
Since $M_1$ and $M_2$ are of near-minimum weight, we have $w_1, w_2 < w^*+1/(8N^4) $.
Using this with the above inequality, we get $w'_1 + w'_2 < 2w^*$. 
This implies that at least one of the two matchings $M_1'$ and $M_2'$
have weight smaller than $w^*$, which is a contradiction.

\end{proof}


\subsection{Weight scheme}
Now, we come to the design of an isolating weight assignment for the graph and the proof of our main theorem.
One of the components of our weight scheme is the isolating weight assignment $W_{TV}$
constructed by Tewari and Vinodchandran~\cite{tewari2012greens} for planar bipartite graph. 
We will use the same weight scheme, but for any graph (not necessarily planar) embedded in the plane. 

Consider a bipartite graph $G = (A, B, E)$ (not necessarily planar) with a straight-line embedding in $\mathbb{R}^2$. 
For any vertex $u$, let $(x_u,y_u)$ be the associated point in $\mathbb{R}^2$. 
For an edge $e = (u, v)$, where $u \in A$ and $v \in B$, 
we define the weight function $W_{TV}$ as follows: 
\[ W_{TV}(e) = (y_v - y_u) \times (x_v + x_u) \]
Then, the theorem below says that $W_{TV}$ is isolating for bipartite planar graphs. 
\begin{theorem}[\cite{tewari2012greens}]
    Let G be a planar bipartite graph.
    Then with respect to weight function $W_{TV}$ (defined using any planar embedding), 
    then the minimum weight perfect matching in $G$, if one exists, is unique. 
\label{theo:raghunath}
\end{theorem}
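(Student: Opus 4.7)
The plan is to reinterpret $W_{TV}$ as a discrete signed-area functional and combine this with the Jordan curve theorem. For any directed edge from $p = (x_p, y_p)$ to $q = (x_q, y_q)$, the quantity $(x_p + x_q)(y_q - y_p)$ is twice the signed trapezoidal area swept between the segment $pq$ and the $x$-axis. Consequently, for any closed polygon $p_1 p_2 \cdots p_n$ traversed in order, the shoelace identity
\[ 2\,\mathrm{Area}(p_1 p_2 \cdots p_n) \;=\; \sum_{i=1}^{n} (x_i + x_{i+1})(y_{i+1} - y_i) \]
holds, where $\mathrm{Area}$ denotes the signed (oriented) enclosed area. The TV weight of a single $A$-to-$B$ edge is exactly one such trapezoidal term with the $+$ orientation.

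I would then argue by contradiction. Suppose two distinct minimum-weight perfect matchings $M_1, M_2$ exist. Their symmetric difference $M_1 \triangle M_2$ decomposes into vertex-disjoint alternating cycles; pick any such cycle $C$ and traverse it as $a_1 b_1 a_2 b_2 \cdots a_k b_k a_1$, with $(a_i, b_i) \in M_1$ and $(b_i, a_{i+1}) \in M_2$. Plugging this traversal into the shoelace identity, the $M_1$-edges are traversed $A \to B$ (contributing $+W_{TV}(e)$ each) while the $M_2$-edges are traversed $B \to A$ (contributing $-W_{TV}(e)$ each, because reversing orientation flips the sign of $y_q - y_p$). The sum collapses to
\[ 2\,\mathrm{Area}(C) \;=\; \sum_{e \in M_1 \cap C} W_{TV}(e) \;-\; \sum_{e \in M_2 \cap C} W_{TV}(e). \]

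Since $G$ is planar, I would fix a straight-line planar embedding (which exists by F\'ary's theorem, and in any case only the vertex coordinates enter $W_{TV}$). Each alternating cycle $C$ then traces a simple closed polygon in the plane, and by the Jordan curve theorem its enclosed region has strictly positive unsigned area, so $\mathrm{Area}(C) \neq 0$. Hence the right-hand side above is nonzero, which means that exchanging $M_1 \cap C$ with $M_2 \cap C$ changes the weight of $M_1$ by exactly $-2\,\mathrm{Area}(C)$. If $\mathrm{Area}(C) > 0$ this yields a perfect matching strictly lighter than $M_1$; if $\mathrm{Area}(C) < 0$, performing the analogous swap inside $M_2$ yields one strictly lighter than $M_2$. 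Either possibility contradicts minimality, so $M_1 = M_2$.

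The main obstacle is the sign bookkeeping in the middle step: one must line up the alternating-cycle traversal with the shoelace convention so that $M_1$-contributions pick up a $+$ and $M_2$-contributions a $-$, which is exactly what the bipartition $A, B$ and the definition $W_{TV}(u,v) = (y_v - y_u)(x_v + x_u)$ (with $u \in A$, $v \in B$) are engineered to guarantee. Once the signs line up, the isolation claim reduces to the purely geometric fact that a simple polygon encloses nonzero signed area, which is immediate from Jordan.
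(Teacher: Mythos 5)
The paper does not prove this theorem; it is imported as a black box from Tewari--Vinodchandran, whose argument is exactly the discrete Green's-theorem/shoelace computation you give. Your reconstruction is correct: the sign bookkeeping works out because every edge of the alternating cycle is traversed $A \to B$ if it lies in $M_1$ and $B \to A$ if it lies in $M_2$, so the telescoping shoelace sum equals $W_{TV}(M_1 \cap C) - W_{TV}(M_2 \cap C) = 2\,\mathrm{Area}(C)$, and the cycle is a simple polygon with nonzero area precisely because the embedding is straight-line planar (which you secure via F\'ary's theorem, and which in this paper's application is automatic since the planar subgraph $H$ of Lemma~\ref{lem: union lemma} comes with a crossing-free straight-line drawing on the grid). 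The only point worth flagging is that your aside ``only the vertex coordinates enter $W_{TV}$'' is not by itself enough --- the coordinates must come from a straight-line planar embedding, or the polygon traced by $C$ could self-intersect and enclose zero signed area --- but since you fix such an embedding first, the argument stands.
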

%
For a geometric bipartite graph, 
our main idea is to combine $W_{TV}$ with the approximate distance function (up to a certain number of bits of precision)
The purpose of combining  $W_{TV}$ is to break ties among minimum weight perfect matchings
according to the approximate distance function. 
%

Let  $G(A, B, E)$ be  a geometric bipartite graph on the
$N \times N$ integer grid. 
Let $d(\cdot)$ be the weight function on the edges defined using the Euclidean distance
and let it naturally extend to subsets of edges.
For any positive integer $\ell$,
let us define the approximate distance function $d_\ell \colon E \to \mathbb{Z}$ 
 as 
 \[d_\ell (e) = \lfloor   d(e)  \times 2^\ell   \rfloor .\] 
First let us show that the minimum weight perfect matchings with respect to 
approximate distance function remain near-minimum with respect to the exact distance function.

\begin{claim}
    For any positive integer $\ell$, 
    let $M$ and $M^*$ be  minimum weight perfect matchings with respect to functions
    $d_\ell$ and $d$, respectively. 
    Then, \[d(M) < d(M^*) + n/2^\ell .\]
    \label{cla:approx}
\end{claim}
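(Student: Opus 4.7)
The plan is to exploit the simple fact that the floor function $\lfloor \cdot \rfloor$ only loses at most one unit per application, so that the rescaled approximate weight $d_\ell / 2^\ell$ differs from the true distance $d$ by at most $1/2^\ell$ per edge. Since every perfect matching has exactly $n$ edges, summing over the matching gives a total error of at most $n/2^\ell$.

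Concretely, I would first record the per-edge inequality coming directly from the definition $d_\ell(e) = \lfloor d(e) \cdot 2^\ell \rfloor$, namely
\[
d(e) - \frac{1}{2^\ell} \;<\; \frac{d_\ell(e)}{2^\ell} \;\le\; d(e).
\]
Summing over the $n$ edges of any perfect matching $M'$ yields
\[
d(M') - \frac{n}{2^\ell} \;<\; \frac{d_\ell(M')}{2^\ell} \;\le\; d(M').
\]

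Next I would chain these bounds for our two specific matchings. Applying the lower side to $M$ and the upper side to $M^*$ gives
\[
d(M) - \frac{n}{2^\ell} \;<\; \frac{d_\ell(M)}{2^\ell} \qquad \text{and} \qquad \frac{d_\ell(M^*)}{2^\ell} \;\le\; d(M^*).
\]
The final ingredient is the hypothesis that $M$ minimizes $d_\ell$, which implies $d_\ell(M) \le d_\ell(M^*)$; dividing by $2^\ell$ and stitching the three inequalities together produces
\[
d(M) - \frac{n}{2^\ell} \;<\; \frac{d_\ell(M)}{2^\ell} \;\le\; \frac{d_\ell(M^*)}{2^\ell} \;\le\; d(M^*),
\]
which is exactly the claimed bound after rearrangement.

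This claim is essentially bookkeeping, so there is no real obstacle: the only point to be careful about is which direction of the per-edge rounding inequality is strict and which is non-strict, since the statement of the claim is a strict inequality. The strictness is preserved because at least one edge contributes a strictly positive rounding error (indeed $d(e)$ is generically irrational while $d_\ell(e)/2^\ell$ is rational), and more cleanly just from the strict ``$<$'' in $d(M) - n/2^\ell < d_\ell(M)/2^\ell$ that results from $n$ strict per-edge inequalities $d(e) - 1/2^\ell < d_\ell(e)/2^\ell$.
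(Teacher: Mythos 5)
Your proposal is correct and follows essentially the same argument as the paper: the per-edge rounding inequality $2^\ell d(e) - 1 < d_\ell(e) \le 2^\ell d(e)$, summed over the $n$ edges of a matching, combined with the optimality of $M$ for $d_\ell$ and of $M^*$ for $d$. The only cosmetic difference is that you normalize by $2^\ell$ throughout while the paper keeps everything scaled by $2^\ell$.
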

\begin{proof}
Observe that for any edge $e$, $ 2^\ell d(e) -1 < d_\ell (e)\leq 2^\ell d(e)$.
Hence, for  perfect matching $M$,
\[2^\ell d(M) -n < d_\ell(M) \leq 2^\ell d(M) .\]
 Then, we can write
\[ 2^\ell d(M) < d_\ell (M) + n \leq  d_\ell (M^*) + n \leq 2^ \ell d(M^*) + n   .\]
This implies that $d(M) < d(M^*) + n/2^\ell$. 
\end{proof}

\paragraph{Weight scheme.}
For any integer $\ell$, now
 let us define the  combined weight function $W_{\ell}$ on the edges as follows:
 \[ W_{\ell} := (2n N^2+1) \times d_\ell + W_{TV} \]
%
Here, the scaling $d_\ell$ with a large number ensures that 
$W_{\ell}$ has the same ordering of perfect matchings as $d_\ell$, 
and the $W_{TV}$ function plays the role of tie breaking. 
Our next lemma says that when to take enough number of bits from the 
distance function and then combine it with $W_{TV}$ as above,
the resulting weight function is isolating. 

\begin{lemma}  
For any integer $\ell \geq  4 \log N + \log n +3 $, 
the minimum weight perfect matching in $G$ with respect
to the weight function $W_\ell$ is unique.
\label{lem: uniqness}
\end{lemma}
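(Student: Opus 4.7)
The plan is to combine the planarity of the near-minimum subgraph from Lemma~\ref{lem: union lemma} with the Tewari--Vinodchandran isolation theorem (Theorem~\ref{theo:raghunath}). I proceed in three steps.

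First, I bound the auxiliary weight $W_{TV}$. Since all coordinates lie in $[0,N]$, each edge $e=(u,v)$ satisfies $|W_{TV}(e)|=|y_v-y_u|\,(x_v+x_u)\le 2N^2$, so $|W_{TV}(M)|\le 2nN^2$ for every perfect matching $M$, and $|W_{TV}(M_1)-W_{TV}(M_2)|\le 4nN^2$ for any pair.

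Second, I argue that every $W_\ell$-minimizer $M$ is near-minimum with respect to the Euclidean distance $d$, and hence lies in the planar subgraph $H$ of Lemma~\ref{lem: union lemma}. Letting $M^*$ be a $d_\ell$-minimizer, the inequality $W_\ell(M)\le W_\ell(M^*)$ combined with the $W_{TV}$-bound forces the nonnegative integer $d_\ell(M)-d_\ell(M^*)$ to be bounded by a small constant (at most $1$ with the stated scaling, and exactly $0$ if the scaling factor is slightly enlarged). Substituting into the floor estimates of Claim~\ref{cla:approx} and using $\ell\ge 4\log N+\log n+3$, I obtain $d(M)-d(M_d^*)<1/(8N^4)$, where $M_d^*$ is a true $d$-minimizer. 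By Lemma~\ref{lem: union lemma}, $M$ lies in the planar bipartite subgraph $H$.

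Third, assume for contradiction that two distinct $W_\ell$-minimizers $M_1\ne M_2$ exist; by the previous step both lie in $H$. The scaling factor is chosen so that $W_\ell$ realises the lexicographic $(d_\ell,W_{TV})$ order on $H$, so that $M_1$ and $M_2$ must share the same $d_\ell$ value, and therefore also the same $W_{TV}$ value. But Theorem~\ref{theo:raghunath} applied to the planar bipartite graph $H$ asserts that the minimum-$W_{TV}$ perfect matching in $H$ is unique, yielding the desired contradiction.

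The main obstacle lies in making the lexicographic reduction of the last step rigorous: a naive comparison between the scaling factor $2nN^2+1$ and the worst-case $W_{TV}$-difference $4nN^2$ only yields $|d_\ell(M_1)-d_\ell(M_2)|\le 1$, not equality. To close this gap I would exploit the planarity of $H$ more carefully: $M_1\triangle M_2$ decomposes into alternating face-bounded cycles on which the $W_{TV}$-difference equals $2\sum_C\pm\mathrm{area}(C)$, an integer whose admissible values cannot cancel the scaled $d_\ell$-offset. Equivalently, one could simply enlarge the scaling factor (e.g.\ to $4nN^2+1$) with no effect on the bound on $\ell$.
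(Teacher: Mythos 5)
Your overall route is the same as the paper's: scale $d_\ell$ so that $W_\ell$ refines the $d_\ell$-order, show that every $W_\ell$-minimizer is near-minimum for $d$ and hence lies in the planar subgraph $H$ given by Lemma~\ref{lem: union lemma}, and then invoke Theorem~\ref{theo:raghunath} on $H$. Your observation about the scaling constant is a fair catch: since $W_{TV}(M)$ can be negative, the worst-case difference $\abs{W_{TV}(M_1)-W_{TV}(M_2)}$ is $4nN^2$ rather than $2nN^2$, so with the factor $2nN^2+1$ you only get $d_\ell(M)-d_\ell(M^*)\le 1$, which in turn gives $d(M)<d(M^*_d)+(n+1)/2^\ell$ --- slightly more than the $1/(8N^4)$ needed for near-minimality. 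Replacing the factor by $4nN^2+1$ (which costs nothing in the bit-length analysis) forces $d_\ell(M)=d_\ell(M^*)$ and closes this cleanly; the ``signed area on face-bounded cycles'' detour you sketch is unnecessary.

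The one genuine gap is in your last step. From ``$M_1$ and $M_2$ have the same $W_{TV}$ value'' you cannot yet contradict Theorem~\ref{theo:raghunath}: that theorem says the $W_{TV}$-minimum perfect matching of $H$ is unique, but $H$ could a priori contain perfect matchings that are not $d_\ell$-minimizers, and such a matching could have strictly smaller $W_{TV}$ than $M_1$ and $M_2$ --- in which case $M_1$ and $M_2$ are merely tied with each other, not tied at the minimum of $H$, and no contradiction follows. The missing ingredient is that every perfect matching of $H$ (the union of the $d_\ell$-minimizers) is itself a $d_\ell$-minimizer. This holds for bipartite graphs by LP duality: every edge of $H$ is tight for an optimal dual solution, so any perfect matching of $H$ uses only tight edges and attains the optimum. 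Granting this, $W_\ell$ restricted to the perfect matchings of $H$ equals $W_{TV}$ plus a constant, so the unique $W_{TV}$-minimizer of $H$ is the unique $W_\ell$-minimizer of $G$; this is exactly how the paper finishes (it asserts, without proof, that ``$d_\ell$ gives equal weights to all the perfect matchings in $H$''). You should state and justify this fact explicitly before appealing to the uniqueness in Theorem~\ref{theo:raghunath}.
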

\begin{proof} 
First observe that for any two perfect matchings $M_1$ and $M_2$,
\[d_\ell(M_1) > d_\ell(M_2) \implies W_\ell(M_1) > W_\ell(M_2) .\]
This is because the maximum contribution of $W_{TV}$ 
to the weight of a matching can be at most $n \times 2N^2$. 
Thus, we can write 
\begin{align*}
W_\ell(M_1) - W_\ell(M_2) &= (2n N^2+1) (d_\ell(M_1) - d_\ell(M_2)) + 
W_{TV}(M_1) - W_{TV}(M_2) \\
 &\geq (2n N^2+1) \cdot 1 + 0 - 2nN^2. \\
 &\geq 1
\end{align*}

It follows that the set of minimum weight perfect matchings with respect to $W_\ell$
is a subset of that with respect to $d_\ell$. 
Now, we argue that  these sets form a planar subgraph.

\begin{claim}
The union of minimum weight perfect matchings 
with respect to $d_\ell$ forms a planar subgraph. 
\end{claim}
\begin{proof}
Let $M$ and $M^*$ be the minimum weight perfect matchings with respect to 
functions $d_\ell$ and $d$, respectively.
From Claim~\ref{cla:approx} we have that $d(M) < d(M^*) + n/2^\ell$. 
By substituting $\ell = 4 \log N +  \log n +3$,
we get that the gap is less than $1/(8N^4)$.
Hence, $M$ is a near-minimum weight perfect matching with respect to the 
$d(\cdot)$.
Then,
the claim  follows from Lemma~\ref{lem: union lemma}. 
\end{proof}
To finish the proof of the lemma,
let $H$ be the subgraph formed by the union of minimum weight
perfect matchings with respect to $d_\ell$.
Clearly, $d_\ell$ gives equal weights to all the perfect matchings
in $H$. 
Thus, the function $W_{\ell}$ is same as $W_{TV}$ on $H$ (up to an additive constant).
From Theorem~\ref{theo:raghunath}, we know that $W_{TV}$ ensures
a unique minimum weight perfect matching in the planar graph $H$. 
Hence, so does $W_\ell$.
\end{proof}

\paragraph{Proof of the main theorem (Theorem~\ref{theo: Main theo})}
Once we have shown how to construct an isolating weight assignment, 
we just need to use the algorithm of 
Mulmuley, Vazirani and Vazirani~\cite{mulmuley1987matching} 
to construct the minimum weight perfect matching. 


\begin{theorem}[\cite{mulmuley1987matching}]
\label{mvv-thm}
Given a graph $G=(V,E)$ with an isolating weight assignment on the edges that uses $O(\log n)$ bits, 
there is an \textsf{NC} algorithm to  find the minimum-weight perfect matching.
\label{theo: MVV}
\end{theorem}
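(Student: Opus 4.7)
The plan is to implement the classical Mulmuley--Vazirani--Vazirani algorithm, which turns isolation into an NC extraction of the minimum weight perfect matching via a $2$-adic valuation argument on determinants. I would associate to $G$ with its isolating weight assignment $w$ a matrix whose determinant encodes all perfect matchings together with their weights. In the bipartite case, let $B$ be the $n \times n$ matrix with $B_{ij} = 2^{w(e)}$ whenever $e = (a_i, b_j)$ is an edge and $0$ otherwise; the Leibniz formula gives $\det(B) = \sum_{M} \mathrm{sgn}(\sigma_M) \cdot 2^{w(M)}$, summed over perfect matchings $M$ with their associated permutations $\sigma_M$. For general $G$, I would use the skew-symmetric Tutte matrix $T$ with $T_{ij} = 2^{w(e)}$, $T_{ji} = -2^{w(e)}$ for each edge $e = \{i,j\}$ with $i < j$, and rely on the Pfaffian identity $\det(T) = \mathrm{Pf}(T)^2$ where $\mathrm{Pf}(T) = \sum_M \pm 2^{w(M)}$.

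The next step exploits isolation. Let $M^\ast$ denote the unique minimum weight perfect matching and set $w^\ast := w(M^\ast)$. Since every other matching $M \neq M^\ast$ satisfies $w(M) \geq w^\ast + 1$, its contribution is divisible by $2^{w^\ast + 1}$, so the contribution $\pm 2^{w^\ast}$ of $M^\ast$ survives modulo $2^{w^\ast + 1}$. Hence $\det(B)$ (respectively $\mathrm{Pf}(T)$) is nonzero and its $2$-adic valuation equals $w^\ast$. Reading off $w^\ast$ reduces to finding the largest power of two dividing an integer of $\poly(n)$ bit length.

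To reconstruct $M^\ast$ edge by edge, I would, for each edge $e = (i,j)$ in parallel, form the cofactor-type quantity
\[ Q_e := (-1)^{i+j} \cdot 2^{w(e)} \cdot \det(B^{(ij)}), \]
where $B^{(ij)}$ denotes the minor obtained by deleting row $i$ and column $j$. Cofactor expansion shows $Q_e = \sum_{M \ni e} \pm 2^{w(M)}$. If $e \in M^\ast$, then $M^\ast$ contributes $\pm 2^{w^\ast}$ to this sum and every other term is divisible by $2^{w^\ast + 1}$, so $Q_e / 2^{w^\ast}$ is odd; if $e \notin M^\ast$, then every $M$ in the sum has weight at least $w^\ast + 1$ and the quotient is even. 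A single parity check on $Q_e / 2^{w^\ast}$ therefore decides membership of $e$ in $M^\ast$. The analogue for general graphs uses appropriate sub-Pfaffians of $T$, with the same parity argument.

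The main obstacle, and simultaneously the reason the algorithm lies in NC, is the integer arithmetic. Because the weights are bounded by $\poly(n)$, the entries of $B$ (or $T$) are integers of $\poly(n)$ bit length, and so are $\det(B)$ and all $O(n^2)$ cofactors. Integer determinants of matrices with $\poly(n)$-bit entries can be computed in NC by Csanky's or Berkowitz's algorithm, and iterated addition and multiplication of $\poly(n)$-bit integers is itself in NC. Performing the $n^2$ cofactor computations and parity tests in parallel, together with the $2$-adic valuation for $w^\ast$, assembles $M^\ast$ in polylogarithmic depth with polynomially many processors, which is the claimed conclusion.
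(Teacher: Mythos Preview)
Your proposal is correct and reproduces the classical Mulmuley--Vazirani--Vazirani argument faithfully: substitute $2^{w(e)}$ into the biadjacency (or Tutte) matrix, read off $w^\ast$ from the $2$-adic valuation of the determinant, and then test each edge via the parity of the corresponding cofactor divided by $2^{w^\ast}$; the $O(\log n)$-bit weight bound keeps all integers at $\poly(n)$ bits so that determinant computation stays in \textsf{NC}.

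There is nothing to compare against here: the paper does not supply its own proof of this theorem. It is stated as a black-box citation to~\cite{mulmuley1987matching} and immediately used as an ingredient in the proof of Theorem~\ref{theo: Main theo}. Your sketch is exactly the argument one finds in the original MVV paper, so in that sense you have filled in what the paper deliberately omitted.
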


Now, we are ready to prove the main theorem. 
Suppose we are given a bipartite set of $2n$ points in $N \times N$ integer grid.
Recall that the weight of an edge is defined to be the Euclidean distance between the endpoints. 
Our goal is to construct a perfect matching whose weight is at most $w^* + \delta$,
where $\delta$ is the given error parameter and $w^*$ is the minimum weight of a perfect matching. 
If we choose $\ell \geq \log (n/ \delta )$, then from Claim~\ref{cla:approx},
we know that a minimum weight perfect matching with respect to function 
$d_\ell(\cdot)$ will have the desired property.

We choose $\ell = \max\{ \log(n/\delta) , 4 \log N +  \log n +3 \}$. 
Then we use the weight scheme $W_\ell$ with the MVV algorithm (Theorem~\ref{theo: MVV}).
Recall that from Lemma~\ref{lem: uniqness}, we have the isolation property required in 
Theorem~\ref{theo: MVV}. 
Finally, let us analyse the number of bits used by weight function $W_\ell$. 
The maximum weight given to any edge by function $d(\cdot)$ is at most $\sqrt{2}N$
and by function $W_{TV}$, it is at most $2N^2$.
Thus, maximum weight given to any edge by function $W_\ell$
will be at most $2^\ell \times \sqrt{2}N \times (2n N^2+1) + 2N^2 $. 
The number of bits in weight of any edge come out  to be $O(\log N  n / \delta)$. 
Hence, we have an \textsf{NC} algorithm, whenever $N$ and $1/\delta$
are polynomial in $n$. 






\section{Conclusion}
In this work, we explored the parallel complexity of \textsf{EWPM} problem. We established a lower bound which shows that for \textsf{EWPM}, a linear number of bits is required to distinguish the minimum-weight perfect matching from others. 
Next, we showed that \textsf{EWPM} problem that allows up to $\frac{1}{\poly(n)}$ error, is in \textsf{NC}. 
The main question arises from our work is whether 
the non-bipartite version of \textsf{EWPM}  is also in \textsf{NC}. 
Another possible extension is to consider the bipartite version in 3 or higher dimensions. 


\bibliographystyle{alphaurl}
\bibliography{references}

\end{document}